\newtheorem{theorem}{Theorem}
\newtheorem{proposition}{Proposition}
\newtheorem*{claim*}{Claim}
\newtheorem{claim}{Claim}
\definecolor{ForestGreen}{RGB}{34,139,34}
\newcommand{\set}[1]{\{{#1}\}}
\let\eps=\epsilon
\title{Complexity Gaps between Point and Interval Temporal Graphs for some Reachability Problems\footnote{This work was supported by the French National Research Agency (ANR) through projects ANR-22-CE48-0001 (TEMPOGRAL) and ANR-24-CE48-4377 (GODASse).}} 
\author[1]{Guillaume Aubian}
\affil[1]{IRIF, CNRS \& Université Paris Cité}{}{}{}
\author[2]{Filippo Brunelli}
\affil[2]{European Commission –- Joint Research Centre (JRC)}{}{}{}
\author[3]{Feodor Dragan}
\affil[3]{Kent State University}{}{}{}
\author[4]{Guillaume Ducoffe}
\affil[4]{University of Bucharest, Faculty of Mathematics and Computer Science, and National Institute for Research and Development in Informatics, Romania}{}{}{}
\author[1]{Michel Habib}
\author[1]{Allen Ibiapina}
\author[5]{Laurent Viennot}
\affil[5]{Inria, DI ENS, Paris}{}{}{}
\begin{document}

\maketitle

\begin{abstract}
Temporal graphs arise when modeling interactions that evolve over time. They usually come in several flavors, depending on the number of parameters used to describe the temporal aspects of the interactions: time of appearance, duration, delay of transmission. In the point model, edges appear at specific points in time, whereas in the more general interval model, edges can be present over specific time intervals. In both models, the delay for traversing an edge can change with each edge appearance. When time is discrete, the two models are equivalent in the sense that the presence of an edge during an interval is equivalent to a sequence of point-in-time occurrences of the edge. However, this transformation can drastically change the size of the input and has implications for complexity. 

Indeed, we show a gap between the two models with respect to the complexity of the classical problem of computing a fastest temporal path from a source vertex to a target vertex, i.e. a path where edges can be traversed one after another in time and such that the total duration from source to target is minimized. It can be solved in near-linear time in the point model, while we show that the interval model requires quadratic time under classical assumptions of fine-grained complexity. With respect to linear time, our lower bound implies a factor of the number of vertices, while the best known algorithm has a factor of the number of underlying edges. We also show a similar complexity gap for computing a shortest temporal path, i.e. a temporal path with a minimum number of edges. Here our lower bound matches known upper bounds up to a logarithmic factor. Interestingly, we show that near-linear time for fastest temporal path computation is possible in the interval model when it is restricted to uniform delay zero, i.e., when traversing an edge is instantaneous. However, this special case is not exempt from our lower bound for shortest temporal path computation. These two results should be contrasted with the computation of a foremost temporal path, i.e., a temporal path that arrives as early as possible. It is well known that this computation can be solved in near-linear time in both models.

We also show that there is no gap in testing the all-to-all temporal connectivity of a temporal graph. We demonstrate a quadratic lower bound that applies to both the interval and point models and aligns with the existing upper bounds.
\end{abstract}


\section{Introduction}

Graphs are the standard mathematical framework for representing relationships within networks. However, in many real-world scenarios, such as transportation systems or social networks, these relationships are time-sensitive and evolve over time. Temporal graphs address this dynamic nature by modeling networks where connections change over time.
Their study traces back to time-dependent networks~\cite{CookeH1966} in the context of road networks. They were later re-introduced under various flavors of models, see e.g.~\cite{CasteigtsFQS2012,Holme2015,LatapyVM2018,Michail2016}. 
In the simplest model, each edge appears at specific points in time. 
In the most advanced model, each edge is present for entire time intervals, and the delay for traversing it is given by a time-dependent function, often assumed to be piecewise linear (and, without loss of generality, linear in each interval of appearance).
We are interested in highlighting the difference in complexity between these two models, which we will call \emph{point} temporal graph and \emph{interval} temporal graph, respectively. 

Note that the interval model obviously encompasses the point model, since a point in time corresponds to a time interval with equal bounds. In both models, each appearance of an edge can be represented by a tuple storing the two vertices of the edge with the few associated time parameters (assuming linearity in each appearance interval for the interval model), and the size of the input can be measured by the total number $M$ of such edge appearances. If we remove the time information from the list of tuples describing a temporal graph, and forget about multiplicities, we obtain what we call its underlying graph.
If time is discrete, the two models may seem equivalent, since an edge appearance during an interval of length $\ell$ can be seen as $\ell$ point-by-point appearances. 
However, the size of the input can then grow by an exponential factor, leading to different complexities in solving a problem in one model or the other.

A recent strand of research revisits all classical graph problems in the context of temporal graphs, see e.g.~\cite{AKRIDA2020,CasteigtsPS2019,HAAG2022,KEMPE2002,MertziosMNZZ2023,MICHAIL20162}.
Most of these works focus on the point model while the interval model may have been overlooked. Interestingly, the natural notion of connectivity in temporal graphs arises from temporal paths whose computation is the subject of various works in both models (see e.g. \cite{BuiXuanFJ2003,WuCKHHW2016}). A temporal path is a path whose edges can be traversed one after another sequentially in time. 
%
When considering all temporal paths between a source vertex and a target vertex, the time aspect gives rise to several notions of shortest path: in particular, a \emph{foremost} temporal path arrives at the target as early as possible, a \emph{shortest} temporal path uses a minimum number of edges, and a \emph{fastest} temporal path has a minimum duration, i.e. the time span between leaving the source and arriving at the target is minimal. 

First, we observe discrepancies between the two models with respect to the complexity of known algorithms for computing a shortest or a fastest temporal path. They can both be found in near-linear time in the point model~\cite{WuCKHHW2016}, while in the interval model the best algorithms for computing a shortest temporal path~\cite{BuiXuanFJ2003,jain2022algorithms} or a fastest temporal path~\cite{DehneOS2012} are slower by a factor depending on the size of the input, $n$ and $m=O(n^2)$ respectively, if the input has $n$ vertices and the underlying graph has $m$ edges.
On the other hand, foremost temporal paths can be computed in near-linear time in both models using a variant of Dijkstra algorithm~\cite{Dreyfus1969,BuiXuanFJ2003}.
To the best of our knowledge, no non-trivial lower bounds are known. Thus, we ask whether a slowdown factor proportional to the number of vertices is necessary to compute a fastest or a shortest temporal path in the interval model.

Second, no non-trivial lower bounds seem to be known for the natural problem of temporal connectivity, i.e. the existence of a temporal path between any pair of vertices. Indeed, the best known approach is to perform a single-source foremost temporal path computation for each possible source vertex~\cite{BhadraF2003,AkridaGMS2017}, resulting in time complexity of $\widetilde{\cal O}(nM)$ for a temporal graph with $n$ vertices and $M$ temporal edges in both models. Again, we ask whether a factor proportional to the number of vertices is necessary when testing temporal connectivity, even in a point temporal graph.

The point model has been extensively studied without considering delays (see, e.g., \cite{KEMPE2002,Michail2016}). Temporal paths then classically come in two flavors: \emph{strict} when times of appearance strictly increase along the path, and \emph{non-strict} when times of appearance are non-decreasing. These two flavors are equivalently grasped by delays of one and zero respectively. Similarly, the interval model has been extensively studied without delays in the literature about social networks (see, e.g., \cite{Holme2012,Holme2015,LatapyVM2018}). Therefore, we pay special attention to the specific cases where all the delays are one and zero, respectively.

\begin{table}[t]
  \begin{center}
  \resizebox{\linewidth}{!}{%
  \setlength\extrarowheight{1mm}
  \begin{tabular}{|c|c|c|c|c|}
    \hline
        \multirow{2}{*}{\diagbox[width=25mm]{Problem}{Model}}
        & \multirow{2}{*}{Point}
        & \multirow{2}{*}{Interval}
        & \multicolumn{2}{c|}{Undirected Interval}
        \\
        \cline{4-5}
        &&& delay zero & delay one
        \\
    \hline{}
        Foremost
        & $\widetilde{\cal O}(M)$ \cite{Dreyfus1969,BuiXuanFJ2003}
        & $\widetilde{\cal O}(M)$ \cite{Dreyfus1969,BuiXuanFJ2003}
        & $\widetilde{\cal O}(M)$ \cite{Dreyfus1969,BuiXuanFJ2003}
        & $\widetilde{\cal O}(M)$ \cite{Dreyfus1969,BuiXuanFJ2003}
        \\        
    \hline{}
        Fastest
        & $\widetilde{\cal O}(M)$ \cite{WuCKHHW2016,DibbeltPSW2018}
        & \textcolor{blue}{$\Omega^\times((nM)^{1-\varepsilon})$ \textbf{Th.\ref{thm:hardness-apsp}}}, 
        $\widetilde{\cal O}(n^2M)$ \cite{DehneOS2012}
        & \textcolor{blue}{$\widetilde{\cal O}(M)$ \textbf{Th.\ref{thm:zero-delay}}}
        & \textcolor{blue}{${\Omega^+}((nM)^{1-\eps})$ \textbf{Th.\ref{thm:hardness-triangle}}}
        \\  
    \hline{}
        Shortest
        & $\widetilde{\cal O}(M)$ \cite{WuCKHHW2016}
        & \textcolor{blue}{$\Omega^+((nM)^{1-\varepsilon})$ \textbf{Th.\ref{thm:hardness-shortest}}},
        $\widetilde{\cal O}(nM)$ \cite{BuiXuanFJ2003}
        & \textcolor{blue}{${\Omega^+}((nM)^{1-\eps})$ \textbf{Th.\ref{thm:hardness-shortest}}}
        & \textcolor{blue}{${\Omega^+}((nM)^{1-\eps})$ \textbf{Th.\ref{thm:hardness-shortest}}}
        \\[1mm]
    \hline{}
        Connectivity
        & \textcolor{blue}{${\Omega^*}((nM)^{1-\eps})$ \textbf{Th.\ref{thm:hardness-connectivity}}}
        & $\widetilde{\Theta}(nM)$  \cite{Dreyfus1969,BuiXuanFJ2003}
        & \textcolor{blue}{${\Omega^*}((nM)^{1-\eps})$ \textbf{Th.\ref{thm:hardness-connectivity}}}
        & \textcolor{blue}{${\Omega^*}((nM)^{1-\eps})$ \textbf{Th.\ref{thm:hardness-connectivity}}}
        \\
    \hline
   \end{tabular}
   }
 \end{center}   
  \caption{Summary of known results on the complexity of computing a foremost (resp. fastest, shortest) temporal path or testing temporal connectivity in a temporal graph with $n$ vertices and $M$ temporal edges. 
  The lower-bounds are given for any $\eps>0$ under standard fine-grained complexity assumptions.
  Symbols $\Omega^*, \Omega^\times, \Omega^+$ all stand for $\Omega$ under the following respective assumptions:
  ${\Omega}^*$ holds under SETH; ${\Omega}^\times$ holds under the assumption that there is no algorithm that can solve all pairs shortest path (APSP) in truly subcubic time; and ${\Omega}^+$ applies to combinatorial algorithms assuming that no such algorithm can solve triangle detection in truly subcubic time.
  }
  \label{tbl:results}
\end{table}

\paragraph*{Our contribution}%
We give strong 
evidences towards a positive answer to both of the above questions. Our first main result is a subquadratic reduction from diameter two in graphs to temporal connectivity in point temporal graphs. The former problem consists in testing whether an undirected graph has diameter at most two. It is known to require quadratic time under the Strong Exponential Hypothesis (SETH)~\cite{RodittyV2013}. This reduction implies that, for any $\eps>0$, testing temporal connectivity in point temporal graphs with $n$ vertices and $M$ edge appearances requires $\Omega((nM)^{1-\eps})$ time unless SETH is false.  Our reduction is restricted to \emph{undirected} point temporal graphs with \emph{uniform delay zero}, i.e. each edge can be traversed in both directions with delay zero each time it appears. A similar reduction applies to uniform delay one.
Obviously, this lower bound holds also for the more general setting of interval temporal graphs.

Our second main result is a subcubic reduction from negative triangle detection  to fastest temporal path computation. The former consists in detecting whether a weighted graph has a triangle of negative total edge weight. This reduction implies that, for any $\eps>0$, an $O((nM)^{1-\eps})$-time algorithm for fastest temporal path in interval temporal graphs 
would break the state of the art for several classical problems, including e.g. all pairs shortest path (APSP), the replacement paths problem on weighted directed graphs, and verifying the correctness of a matrix product over the $(\min, +)$-semiring, among others~\cite{WilliamsW2018}. 
Our reduction is restricted to undirected interval temporal graphs with \emph{constant delays}, i.e. each edge can be traversed in both directions with symmetric constant delay in each time interval in which it appears.

We supplement the latter lower bound with the following refinements for uniform delays zero and one.
First, we provide a deterministic ``combinatorial'' near-linear-time algorithm for finding a fastest temporal path in undirected interval temporal graphs with uniform delay zero. 
Our algorithm is a rather simple utilization of a dynamic connectivity algorithm. We indeed solve the slightly more technical \emph{profile} problem which consists in computing, given a source vertex and a target vertex, a representation of the (profile) function that assigns to each possible departure time from the source, the corresponding earliest arrival time at the target. Note that the duration of a fastest temporal path can be inferred from the profile function.
Second, we show that this restricted setting to undirected temporal graphs with uniform delay zero is the widest possible in which we can expect a near-linear time ``combinatorial'' algorithm. More precisely, we provide two subcubic reductions from triangle detection in an undirected graph to fastest temporal path computation.
One lower bound applies to undirected interval graphs with uniform delay one, and the other applies to directed interval graphs with uniform delay zero. Note that triangle detection is a classical problem in fine-grained complexity which is, for example, subcubic equivalent to combinatorial Boolean matrix multiplication~\cite{WilliamsW2018}. The notion of a ``combinatorial'' algorithm has no formal definition. Intuitively, it refers to an algorithm which is efficient both theoretically and practically, i.e., whose complexity has a low leading constant.
Our reductions imply that a truly subcubic combinatorial algorithm for one of these restricted settings would constitute a breakthrough with respect to Boolean matrix multiplication algorithms~\cite{WilliamsW2018}.

Finally, we provide a reduction from triangle detection in undirected graphs to shortest path computation in interval temporal graphs. This leads to a lower bound of $\Omega((nM)^{1-\eps})$ time for combinatorial algorithms.
Interestingly, this reduction uses a temporal graph of diameter $\Theta(n)$, as opposed to our other reductions, which are based on temporal paths of constant length. This is indeed necessary as the algorithm of \cite{BuiXuanFJ2003} finds a shortest temporal path in $\widetilde{\cal O}(Dm)$ where $D\le n$ is the largest length of a temporal path (in terms of number of edges) and $m$ is the size of the underlying graph. Note that it uses a sorted data structure which can be pre-computed in $\widetilde{\cal O}(M)$ time. Here, our reduction applies to undirected temporal graphs with uniform delay zero. A slight variant applies to those with uniform delay one. We thus do not expect a faster algorithm in these cases.

See Table~\ref{tbl:results} for a summary of our results and previously known results. Note that our lower bounds for temporal connectivity and shortest temporal path computation are tight with respect to known upper bounds.

\subparagraph*{More related work.}
The difference in complexity between the interval model and the point model has already been noticed in~\cite{BrunelliCV2021} about algorithms solving the profile problem. 
Profile computation cannot be performed in polynomial time if the delay for traversing a temporal edge is linear~\cite{FoschiniHS2014}, but a fastest temporal path can still be computed in $\widetilde{\cal O}(mM)$ time by combining~\cite{FoschiniHS2014} and~\cite{DehneOS2012}. 
The possibility of using dynamic connectivity for computing (instantaneous) strongly connected components is discussed in \cite{RannouML2020}. The following complexity gap between the point and interval models is shown in \cite{CauviV2025}: Finding a restless temporal path (where waiting is not allowed) in a point temporal graph is fixed-parameter tractable (FPT) when parameterized by the vertex-interval-membership-width, a temporal graph parameter introduced in \cite{BumpusBM2023}. However, it is NP-hard in interval temporal graphs of vertex-interval-membership-width equal to three.

\paragraph*{Structure of the paper}

After formally introducing temporal graphs, we first prove lower bounds for temporal connectivity and shortest temporal path computation. We then study fastest temporal path computation, first proving several lower bounds and then presenting an algorithm that escapes these bounds by restricting the input to undirected edges and zero delays. 

\section{Preliminaries}
We represent times with integers\footnote{This restriction does not weaken our lower bounds. Our upper bound algorithm easily generalizes to any representation of reals.} and use Greek letters to name times. An \emph{interval} of time $[\tau_1,\tau_2]$ with $\tau_1\le \tau_2$ represents all times $\tau$ satisfying $\tau_1\le \tau\le \tau_2$. For $\tau_1<\tau_2$, we similarly define $(\tau_1,\tau_2]$ ($[\tau_1,\tau_2)$ respectively) as the set of times $\tau$ satisfying $\tau_1 < \tau\le \tau_2$ ($\tau_1 \le \tau < \tau_2$ respectively).

An \emph{interval temporal graph}, hereafter referred to simply as a temporal graph, is a pair $G = (V, E)$, where $V$ is a set of vertices and $E$ is a set of \emph{temporal edges}. Each temporal edge is a quintuple $e = (u, v, \tau_1, \tau_2, \delta)$, where $u$ and $v$ are vertices, $\tau_1$, $\tau_2$ and $\delta$ are integers,
with $\tau_1\le \tau_2$ and $\delta\ge 0$ that represent its \emph{begin time}, its \emph{ending time} and its \emph{delay} (or traveling time), respectively. Such a temporal edge represents the presence of edge $uv$ during $[\tau_1,\tau_2]$, that is, it can be traversed starting from its \emph{tail} $u$ at any time $\tau\in [\tau_1,\tau_2]$ to arrive in its \emph{head} $v$ at time $\tau+\delta$. We say that $e$ connects $uv$ during interval $[\tau_1,\tau_2]$ with delay $\delta$.
We let $tail(e)=u$, $head(e)=v$, $begin(e)=\tau_1$, $end(e)=\tau_2$, and $delay(e)=\delta$ denote its tail, its head, its begin time, its ending time, and its delay respectively.
When $\tau_1=\tau_2$ the interval is reduced to a point we refer to the \emph{point model}.  
We mostly consider \emph{undirected} temporal graphs where edges can be traversed in both directions. In that case, it is assumed that $E$ is \emph{symmetric}: for each $(u, v, \tau_1, \tau_2, \delta)\in E$, the symmetrical temporal edge $(v, u, \tau_1, \tau_2, \delta)$ is also in $E$. In other words, $G$ is said to be undirected when $E$ is symmetric.
$G$ is said to have \emph{uniform delay $\delta$} if all temporal edges have same delay $\delta$.


The underlying graph of $G$ is the (static) directed graph $G'=(V,E')$ with same vertex set $V$ and edge set $E'=\set{uv : \exists (u, v, \tau_1, \tau_2, \delta)\in E}$. Note that we let $uv$ denote the directed edge from $u$ to $v$. We say that an edge $uv$ of the underlying graph is \emph{present} during $[\tau_1,\tau_2]$ with delay $\delta$ when there is a temporal edge $(u,v,\tau_1,\tau_2,\delta)$ in $E$.  When the temporal graph is undirected we also assume that the symmetrical edge is in $E$ so that $E'$ is symmetric. When $\tau_1=\tau_2$, we say that $uv$ is present at \emph{point} $\tau_1$ with delay $\delta$.
Note that the underlying graph $G'$ is directed. 
In contrast, when considering a (static) undirected graph $(V,E)$ with vertex set $V$ and edge set $E$, we consider that $E\subseteq \binom{V}{2}$ is a set of unordered pairs, and we let $\{u,v\}$ denote an undirected edge between $u,v\in V$.

When a temporal graph is clear from the context, we let $n=|V|$ denote its number of vertices, $M=|E|$ denote its number of temporal edges, and $m=|E'|$ denote its number of underlying edges. It should be noticed that $M$ can be much larger than $m$, since an edge in $E'$ may correspond to many temporal edges in $E$. For simplicity, we assume that any vertex appears at least in one temporal edge so that we have $n=O(m)$.

A \emph{temporal walk} in $G$ is a sequence of pairs $Q=(e_1, \tau_1), \ldots, (e_k, \tau_k)$ where, for each $i \in [k]=\set{1,\ldots,k}$, $e_i\in E$ is a temporal edge and $\tau_i$ is a traversal time in the presence interval of $e_i$, i.e. satisfying $begin(e_i)\le \tau_i\le end(e_i)$.
It is also required that $head(e_i)=tail(e_{i+1})$ for all $i\in [k-1]$, i.e., it induces a walk in the underlying graph $G'$, and that $\tau_i+delay(e_i)\le \tau_{i+1}$ for all $i\in [k-1]$, i.e., each temporal edge is traversed one after the other in time.
When $s=tail(e_1)$ and $head(e_k)=t$, we say that $Q$ is a \emph{temporal $st$-walk}. We also say that $Q$ is a temporal walk departing from $s$ at time $\tau_1$ and arriving in $t$ at time $\tau_k+\delta_k$ where $\delta_k=delay(e_k)$. We let $dep(Q)=\tau_1$,  $arr(Q)=\tau_k+\delta_k$ and $len(Q)=k$ denote its \emph{departure time}, its \emph{arrival time} and its \emph{length} in number of edges respectively. Its \emph{duration} is defined as $dur(Q)=arr(Q)-dep(Q)=\tau_k+\delta_k-\tau_1$. $Q$ is said to be a \emph{loop} when $tail(e_1)=head(e_k)$.
When all vertices $tail(e_1),head(e_1),\ldots,head(e_k)$ are pairwise distinct, $Q$ is said to be a \emph{temporal path}, or a \emph{temporal $st$-path} if $s=tail(e_1)$ and $head(e_k)=t$. Note that any temporal $st$-walk with $s\not=t$ can be transformed into a temporal $st$-path by removing sequences of temporal edges forming loops.
A \emph{fastest} temporal $st$-path is defined as a temporal $st$-path with minimum duration.
A \emph{shortest} temporal $st$-path is defined as a temporal $st$-path with minimum length.
A \emph{foremost} temporal $st$-path is defined as a temporal $st$-path with minimum arrival time.
If there is a temporal $st$-path for any vertices $s, t$, $G$ is \emph{temporally connected}. 

A walk $W$ in the underlying graph $G'$ will be given as a sequence $v_0,\ldots,v_k$ of vertices such that $v_{i-1}v_i$ is an edge for each $i\in[k]$. It is said to be \emph{temporally feasible} if there exists a sequence of times $\tau_1\le \cdots\le\tau_k$ such that each edge $v_{i-1}v_i$ is present at time $\tau_i$ with delay at most $\tau_{i+1}-\tau_i$ (using $\tau_{k+1}=\infty$). It is said to be a path when $v_0,\ldots,v_k$ are pairwise distinct.

\section{Lower bound for temporal connectivity}

Recall that an interval temporal graph is temporally connected if it has a temporal $st$-path for any pair $s,t$ of vertices.
Since computing the set of temporally reachable vertices from a given vertex can be done in linear time up to a logarithmic factor~\cite{Dreyfus1969,BuiXuanFJ2003}, there is an $\widetilde{\cal O}(nM)$-time algorithm for testing whether an interval temporal graph is temporally connected. 
We now state that this bound is tight under SETH, even in point temporal graphs.

\begin{theorem}\label{thm:hardness-connectivity}
    Assuming {\sc SETH}, for any $\varepsilon > 0$, there is no ${\cal O}((nM)^{1-\varepsilon})$-time algorithm for testing whether a temporal graph is temporally connected, even if the temporal graph is an undirected point temporal graph with uniform delay zero (resp. one).
\end{theorem}

\begin{proof}
    Assuming SETH, for any $\eps>0$, there is no $O(n^{2-\eps})$-time algorithm distinguishing diameter $2$ from $3$ in graphs with $O(n)$ nodes and $O(n)$ edges~\cite{Vassilevska2018}. To prove our result, it thus suffices to reduce {\sc Diameter-Two} to \textsc{Temporal-Connectivity}, where the {\sc Diameter-Two} problem asks whether a (static) undirected graph has diameter at most two, and the \textsc{Temporal-Connectivity} problem asks whether a temporal graph is temporally connected.
    For that, let $G=(V,E)$ be an arbitrary undirected graph with $O(n)$ nodes and $O(n)$ edges.
    Without loss of generality, $G$ is connected and has no self-loop. Let $F$ be the set of ordered pairs $(u, v) \in V^2$ such that $\{u,v\} \in E$ (thus $|F| = 2|E|$).
    We construct from $G$ an undirected point temporal graph $H_G$ as follows:
    \begin{itemize}
        \item  The vertex set is $\{s\} \cup V \cup F_1 \cup F_2$ where $F_1, F_2$ are disjoint copies of $F$ and $s$ is an additional vertex. 
        For every \emph{ordered} pair $f = (u, v) \in F$, we let $f_1 \in F_1, \ f_2 \in F_2$ denote its copies. $F_1 \cup F_2$ induces an independent set in the underlying graph of $H_G$ and so does $\{s\}\cup V$.
        
        
        \item For every ordered pair $f = (u, v) \in F$, we add temporal edges with delay zero in $H_G$ so that:
        \begin{itemize}
            \item edge $uf_1$ is present at time $1$ (i.e. interval $[1,1]$),
            \item edge $f_1v$ is present at time $2$,
            \item edge $uf_2$ is present at time $3$,
            \item edge $f_2v$ is present at time $4$,
            \item edges $sf_1$ and $sf_2$ are present at times $0$ and $5$.
        \end{itemize}
    \end{itemize}
    We also add the corresponding symmetrical temporal edges so that $H_G$ is undirected. All temporal edges of $H_G$ have delay zero.  See also Figure~\ref{fig:lower_bound_connectivity}.
    Note that $H_G$ has $O(n)$ vertices and $M=O(n)$ temporal edges as $G$ has $O(n)$ vertices and $O(n)$ edges.
    
    \begin{figure}[!ht]
    \centering
    \begin{tikzpicture}[scale=1, every node/.style={circle, draw, minimum size=8mm, font=\small}, >=Stealth]

        \node (s) at (3, 0) {$s$};
        \node (u) at (0, 0) {$u$};
        \node (f1) at (3,2) {$f_1$};
        \node (f2) at (3,-2) {$f_2$};
        \node (v) at (6, 0) {$v$};

        \draw[thick] (u) -- node[midway, below, draw=none] {$1$} (f1);
        \draw[thick] (u) -- node[midway, below, draw=none] {$3$} (f2);
        \draw[thick] (v) -- node[midway, below, draw=none] {$2$} (f1);
        \draw[thick] (v) -- node[midway, below, draw=none] {$4$} (f2);
        \draw[thick] (s) -- node[midway, right, draw=none] {$0,5$} (f1);
        \draw[thick] (s) -- node[midway, right, draw=none] {$0,5$} (f2);

    \end{tikzpicture}
    \caption{Part of the point temporal graph $H_G$ induced by nodes $s,u,v,f_1,f_2$ for an ordered pair $f = (u, v)$ such that $G$ contains the undirected edge $\{u,v\}$. Each undirected edge $\{x,y\}$ with label $\tau$ in this Figure represents two point temporal edges with delay zero: $(x,y,\tau,\tau,0)$ and $(y,x,\tau,\tau,0)$. Note that $H_G$ contains a similar structure with two other vertices $f_1'\in F_1$ and $f_2'\in F_2$ for the pair $(v,u)$.}
    \label{fig:lower_bound_connectivity}
\end{figure}
    
    Since $\{s\} \cup F_1 \cup F_2$ induces a temporal graph, whose underlying graph is connected, and where all edges are present at time $0$, vertices in $\{s\} \cup F_1 \cup F_2$ can reach each other at time $0$. Similarly, they can reach each other at time $5$. Since there is a temporal edge between every vertex in $V$ and some vertex in $\{s\} \cup F_1 \cup F_2$ at times $1$ and $4$, this implies that $H_G$ is temporally connected if and only if the vertices in $V$ can temporally reach each other.
    
    Let $u, v$ be two distinct vertices of $V$. We will prove that there is a path of length at most $2$ from $u$ to $v$ in $G$ if and only if there is a temporal path from $u$ to $v$ in $H_G$.
    Let us first prove the direct implication. 
    If $u,w,v$ is a path of length $2$ in $G$, let $f = (u, w)$ and $g = (w, v)$. Then $u,f_1,w,g_2,v$ is a path from $u$ to $v$ in the underlying graph of $H_G$ which is temporally feasible in $H_G$ by traversing edges $uf_1, f_1w, wg_2,g_2v$ at times $1,2,3,4$, respectively.
    
    Let us now prove the converse implication, and suppose that there is a temporal path $P$ from $u$ to $v$ in $H_G$. Since any edge incident to $u$ or $v$ is only present at times $1$, $2$, $3$, or $4$, $P$ only uses times $1$, $2$, $3$, and $4$. This implies that $P$ does not use $s$ only available at times $0$ or $5$.
    Since the underlying graph of $H_G$ is bipartite, with bipartition $(V \cup \{s\}, F_1 \cup F_2)$, and 
    since any two edges $xf,fy$ incident to $f \in F_1 \cup F_2$, with $x\not=y$, have distinct time availabilities, 
    this implies that $P$ uses at most $3$ elements of $F_1 \cup F_2$, i.e., the underlying path of $P$ is either $u,f,v$, or $u,f,w,f',v$, or $u,f,w,f',x,f'',v$, with $w,x \in V$ and $f, f',f'' \in F_1 \cup F_2$. {The latter case is indeed impossible. The distinct time availabilities of $uf$ and $fw$ imply that $fw$ is traversed at time $2$ at least. Hence, both $wf'$ and $f'x$ are traversed at time $2$ or later, implying $f'\in F_2$. Edges $wf'$ and $f'x$ are thus traversed at time $3$ and $4$, respectively, in contradiction with the two distinct time availabilities of $xf''$ and $f''v$.} By construction, the two first cases implies that either $\{u,v\} \in E$ or that $\{u,w\}, \{w,v\} \in E$.  Thus, there is a path of length at most $2$ between $u$ and $v$ in $G$.
    
    We thus conclude that $G$ has diameter at most $2$ if and only if $H_G$ is temporally connected.
    
    \medskip
    To obtain the result with uniform delay one, we modify the above construction so that all temporal edges have delay one and edges involving $s$ also appear at times $-1$ and $6$.
    More precisely, we obtain $H_G'$ from $H_G$ by changing all delays to one instead of zero and by adding edges $sf$ at times $-1$ and $6$ for all $f\in F_1\cup F_2$ (again with delay one and adding corresponding symmetrical temporal edges so that $H_G'$ is undirected).
    
    Now, vertices in $\{s\}\cup F_1\cup F_2$ can reach each other at time 1 by passing through $s$. They can also reach each other starting at time 5 and arriving at time 7. They can then reach any $v\in V$ at time 2. Similarly, any vertex $v$ can reach some $f\in F_1\cup F_2$ at time 2 and then any other $f'\in \{s\}\cup F_1\cup F_2$ at time 7. This implies that $H_G'$ is temporally connected if and only if all vertices in $V$ can temporally reach each other. The proof that this happens if and only if $G$ has diameter two is the same as above since the proof uses only strict temporal paths that also correspond to temporal paths in $H_G'$.
    %
\end{proof}

\section{Lower bound for shortest temporal path}

\begin{theorem}\label{thm:hardness-shortest}
    If, for every $\varepsilon > 0$, there is no combinatorial ${\cal O}(n^{3-\varepsilon})$-time algorithm for detecting a triangle in an undirected graph, then for every $\eps' > 0$, there is no combinatorial ${\cal O}((nM)^{1-\eps'})$-time algorithm for computing a shortest temporal $st$-path in a temporal graph, even if the temporal graph is undirected and has uniform delay zero (resp. one).
\end{theorem}

\begin{proof}
Let $G = (V,E)$ be an instance of \textsc{Triangle Detection} that is a (static) undirected graph. Triangle detection consists in finding three pairwise distinct vertices $u,v,w$ such that the three undirected edges $\{u,v\}, \{v,w\}, \{w,u\}$ are in $E$. Suppose $V = \{v_1,\ldots, v_n\}$. A temporal graph $H_G$ is constructed as follows. Here, a \emph{permanent temporal edge} between vertices $u$ and $v$ indicates that both edges $uv$ and $vu$ are present during the interval $[1,n]$. All delays are zero.
\begin{itemize}
    \item The vertex set of $H_G$ consists of the vertices $s$, $t$, and $4$ disjoint copies of $V$, denoted $V_1,\ldots,V_4$. For $j \in \{1, 2, 3, 4\}$ and $i \in \{1,\ldots,n\}$, let $v_i^j$ represent vertex $v_i$ in $V_j$.
    \item In $V_1$ and $V_4$, permanent temporal edges are defined to form a path that respects the vertex ordering. Specifically, for $j \in \{1, 4\}$ and $i \in \{1,\ldots,n-1\}$, we add a permanent temporal edge between $v_i^j$ and $v_{i+1}^j$. Additionally, permanent temporal edges are defined to connect the source vertex $s$ to $v_n^1$ and the target vertex $t$ to $v_1^4$.
    \item For each edge $\{v_i, v_j\} \in E$ with $i < j$, the temporal edges $v_i^1 v_j^2$ and $v_j^3 v_i^4$ (and their symmetric edges $v_j^2 v_i^1, v_i^4 v_j^3$) are both present at time $i$.
    Additionally, a permanent temporal edge between $v_i^2$ and  $v_j^3$ is added.
\end{itemize}
Observe that $H_G$ can be constructed in ${\cal O}(n+m)$ time, where $m$ is the number of edges of $G$.
The correctness of the reduction comes from the following claim. The theorem thus follows from its proof. 

\begin{claim*}
There is a triangle in $G$ if and only if there exists a temporal path in $H_G$ from $s$ to $t$ using at most $n+4$ edges.
\end{claim*}

First, suppose that there is a triangle in $G$ formed by the vertices $v_{i_1}, v_{i_2}, v_{i_3}$ with $i_1 < i_2 < i_3$. Using the permanent edges within $V_1$, we can construct a temporal path starting at $s$ and arriving at $v_{i_1}^1$ at time $1\le i_1$. Let $P_1$ denote this temporal path and note that it contains $n+1-i_1$ edges. By the existence of the triangle, we know that the temporal edges 
$e_1=(v_{i_1}^1, v_{i_2}^2, i_1, i_1, 0), e_2=(v_{i_2}^2, v_{i_3}^3, 1, n, 0), e_3=(v_{i_3}^3, v_{i_1}^4, i_1, i_1, 0)$
 form a temporal path $(e_1,i_1), (e_2, i_1), (e_3,i_1)$ in $H_G$. Denote this temporal path as $P_2$. Using the permanent edges within $V_4$, we can construct a temporal path $P_3$ from $v_{i_1}^4$ to $t$ that uses $i_1$ edges and traverses all of them at time $n\ge i_1$.  The paths $P_1$, $P_2$, and $P_3$ are constructed such that they can be concatenated into a single temporal path from $s$ to $t$. The total number of edges in this concatenated path is $n+4$. See Figure~\ref{fig:lower_bound_shortest} for an illustration of such a temporal path.

\begin{figure}[!ht]
    \centering
    \begin{tikzpicture}[scale=1, every node/.style={circle, draw, minimum size=7mm}, >=Stealth]

        \node (s) at (-1, 4) {$s$};
        \node (t) at (9, 4) {$t$};

        \node (v10) at (1, 6) {$v_1^1$};
        \node (v11) at (1, 4.5) {$v_{i_1}^1$};
        \node (v12) at (1, 3) {$v_{i_2}^1$};
        \node (v13) at (1, 1.5) {$v_{i_3}^1$};
        \node (v14) at (1, 0) {$v_5^1$};

        \node (v20) at (3, 6) {$v_1^2$};
        \node (v21) at (3, 4.5) {$v_{i_1}^2$};
        \node (v22) at (3, 3) {$v_{i_2}^2$};
        \node (v23) at (3, 1.5) {$v_{i_3}^2$};
        \node (v24) at (3, 0) {$v_5^2$};

        \node (v30) at (5, 6) {$v_1^3$};
        \node (v31) at (5, 4.5) {$v_{i_1}^3$};
        \node (v32) at (5, 3) {$v_{i_2}^3$};
        \node (v33) at (5, 1.5) {$v_{i_3}^3$};
        \node (v34) at (5, 0) {$v_5^3$};

        \node (v40) at (7, 6) {$v_1^4$};
        \node (v41) at (7, 4.5) {$v_{i_1}^4$};
        \node (v42) at (7, 3) {$v_{i_2}^4$};
        \node (v43) at (7, 1.5) {$v_{i_3}^4$};
        \node (v44) at (7, 0) {$v_5^4$};

        \draw[thick, dotted, rounded corners, blue!50] (0.5, 6.5) rectangle (1.5, -0.5);
        \draw[thick, dotted, rounded corners, blue!50] (2.5, 6.5) rectangle (3.5, -0.5);
        \draw[thick, dotted, rounded corners, blue!50] (4.5, 6.5) rectangle (5.5, -0.5);
        \draw[thick, dotted, rounded corners, blue!50] (6.5, 6.5) rectangle (7.5, -0.5);

        \node[above=0.2cm, font=\large, text=blue!50, draw=none] at (1, 6.5) {$V_1$};
        \node[above=0.2cm, font=\large, text=blue!50, draw=none] at (3, 6.5) {$V_2$};
        \node[above=0.2cm, font=\large, text=blue!50, draw=none] at (5, 6.5) {$V_3$};
        \node[above=0.2cm, font=\large, text=blue!50, draw=none] at (7, 6.5) {$V_4$};

        
        \foreach \i/\j in {1/2, 2/3, 3/4} {
            \draw[thick, blue,<-] (v1\i) -- node[midway, right, draw=none] {$1$} (v1\j);
        }
        \foreach \i/\j in {0/1} {
            \draw[thick, blue, <-] (v4\i) -- 
            node[midway, left, draw=none] {$n$}
            (v4\j);
        }

        \draw[thick, blue, ->] (s) -- node[midway, left, draw=none] {$1$} (v14);
        \draw[thick, blue, <-] (t) -- node[midway, right, draw=none] {$n$} (v40);

        \draw[thick, dashed, red, ->] (v11) -- node[shift={(-0.3,0.3)}, midway, right, draw=none] {$i_1$} (v22);
        \draw[thick, blue, ->] (v22) -- node[midway, above, draw=none] {$i_1$} (v33);
        \draw[thick, dashed, red, ->] (v33) -- node[midway, above, draw=none] {$i_1$} (v41);

    \end{tikzpicture}
    \caption{A temporal path in $H_G$ corresponding to a triangle $v_{i_1},v_{i_2},v_{i_3}$ in $G$, assuming that $G$ has $n=5$ vertices with $i_1=2,i_2=3,i_3=4$. Labels indicate when edges are traversed. Plain blue edges correspond to permanent temporal edges, i.e., they are present during interval $[1,n]$, while dashed red edges are present only at point $i_1$.}
    \label{fig:lower_bound_shortest}
\end{figure}

Now, suppose that there exists a temporal path $P$ from $s$ to $t$ of length at most $n+4$. Let $P'$ be the temporal path obtained from $P$ by removing its first and last edges (those incident to $s$ and $t$ respectively). 
Let $e$ be the first temporal edge of $P'$ that is not within $V_1$, and let $g$ be the last temporal edge of $P'$ that is not within $V_4$. 
By the structure of $H_G$, $e$ connects a vertex in $V_1$, say $v_{i_1}^1$, with a vertex in $V_2$, say $v_{i_2}^2$, and it is present only at time $i_1$. Similarly, $g$ is incident to vertices $v_{i_3}^3$ and $v_{i_4}^4$, and it is present only at time $i_4$. Moreover, by the definition of temporal path, $g$ appears after (or at the same time as) $e$ in $P$ (possibly in a non-strict path they appear at the same time). Thus, $i_1 \leq i_4$.  
Now, observe that the distance from $s$ to $v_{i_1}^1$ in the underlying graph of $H_G$ is $n - i_1 +1$, the distance from $v_{i_1}^1$ to $v_{i_4}^4$ is at least 3, and the distance from $v_{i_4}^4$ to $t$ is $i_4$. This implies that $P$ uses at least $n+4 + (i_4 - i_1)$ temporal edges. Since the length of $P$ is at most $n+4$, we get $i_4 \leq i_1$. We thus have $i_1 = i_4$. Now, the three temporal edges that $P$ uses between $v_{i_1}^1$ and $v_{i_4}^4=v_{i_1}^4$ must have the form $v_{i_1}^1 v_{i_2}^2, v_{i_2}^2 v_{i_3}^3$ and $v_{i_3}^3 v_{i_1}^4$ for some $i_2,i_3\in[n]$. By the way that $H_G$ is constructed, $\{v_{i_1}, v_{i_2}\}, \{v_{i_2} v_{i_3}\}$ and $\{v_{i_3}, v_{i_1}\}$ are undirected edges in $G$, i.e., $v_{i_1}, v_{i_2}, v_{i_3}$ forms a triangle in $G$.
This concludes the proof for delay zero.

\smallskip
For delay one, the proof is very similar.
We include it for the sake of completeness. 
To obtain an interval temporal graph $H_G'$ with uniform delay one, we modify the temporal edges of $H_G$ as follows. All delays are set to one. All permanent edges are present during interval $[1,3n+3]$ instead of $[1,n]$. For each undirected edge $\{v_i,v_j\}\in E$ with $i<j$, the symmetrical edges $v_i^1v_j^2$ and $v_j^2v_i^1$ are present at point $n+i$ while the symmetrical edges $v_j^3v_i^4$ and $v_i^4v_j^3$ are present at point $n+i+2$. The proof similarly follows from the following claim.

\begin{claim*}
    There is a triangle in $G$ if and only if there exists a temporal path in $H_G'$ from $s$ to $t$ using at most $n+4$ edges.
\end{claim*}

If $G$ contains a triangle $v_{i_1},v_{i_2},v_{i_3}$ with $i_1<i_2<i_3$, we can again form a temporal path of length $n+4$ in $H_G'$ starting from $s$ at time $0$, reaching $v_{i_1}^1$ within $V_1$ at time $n-i_1+1$, waiting in $v_{i_1}^1$ until time $n+i_1$ following the temporal edges $e_1=(v_{i_1}^1, v_{i_2}^2, n+i_1, n+i_1, 1), e_2=(v_{i_2}^2, v_{i_3}^3, 1, 3n+3, 1), e_3=(v_{i_3}^3, v_{i_1}^4, n+i_1+2, i_1, 1)$ at times $n+i_1,n+i_1+1, n+i_1+2$ respectively, reaching $v_{i_1}^4$ at time $n+i_1+3$, reaching $v_1^4$ within $V_4$ at time $n+2i_1+2$, and arriving in $t$ at time $n+2i_1+3$.

Conversely, suppose that $P$ is a temporal $st$-path in $H_G'$ of length at most $n+4$. Its first temporal edge leading outside $V_1\cup\{s\}$ must be $e=(v_{i_1}^1, v_{i_2}^2, n+i_1, n+i_1, 1)$ for some $i_1\in [n]$. Its last temporal edge with a vertex outside $V_4\cup\{t\}$ must be $g=(v_{i_3}^3, v_{i_4}^4, n+i_4+2, i_1, 1)$ for some $i_4\in [n]$. As the minimum length of path from $V_2$ to $V_3$ is one in the underlying graph of $H_G'$, there is at least one temporal edge $f$ between $e$ and $g$ in $P$. As $e$ and $g$ must be traversed at times $n+i_1$ and $n+i_4+2$ respectively, we get $n+i_1+2\le n+i_4+2$ since $P$ is a temporal path and $e,f$ have delay one. We thus have $i_1\le i_4$. Again, the distance from $s$ to $v_{i_1}^1$ in the underlying graph of $H_G'$ is $n-i_1+1$, the distance from $v_{i_1}^1$ to $v_{i_4}^4$ is at least 3, and the distance from $v_{i_4}$ to $t$ is $i_4$. This implies that $P$ has length at least $n+4+(i_4-i_1)\le n+4$. We thus have $i_1=i_4$ and $G$ contains the triangle $v_{i_1}, v_{i_2}, v_{i_3}$.
This concludes the proof for delay one.
\end{proof}

\section{Complexity of fastest temporal path}

\subsection{Lower bounds for fastest temporal path}

 \begin{theorem}\label{thm:hardness-apsp}
    If, for every $\varepsilon > 0$, there is no ${\cal O}(n^{3-\varepsilon})$-time algorithm for detecting a negative triangle in a weighted graph, then for every $\eps' > 0$, there
    is no ${\cal O}((nM)^{1-\eps'})$-time algorithm for computing a fastest temporal $st$-path in a temporal graph, even if 
    the temporal graph is undirected.
\end{theorem}

\begin{proof}
    The {\sc Negative Triangle Detection} problem asks for the existence in an edge-weighted undirected graph of a triangle whose total weight is negative.
    We now present an ${\cal O}(n+m)$-time reduction from {\sc Negative Triangle Detection} to the problem of computing a fastest temporal path in a temporal graph.
    For that, let $G=(V,E,w)$ be an arbitrary edge-weighted undirected graph with $n$ nodes and $m$ edges where $w:E\rightarrow\mathbb{Z}$ assigns a weight $w_{e}$ to each edge $e\in E$.
    Without loss of generality, $V = \{0,1,2,\ldots,n-1\}$ and $G$ is loopless.
    Let $T = 2n\max_{e \in E} |w_e|$.
    We construct  from $G$ an undirected temporal graph $H_G$ as follows:
    \begin{itemize}
        \item The vertex set is $\{s,t\} \cup V_1 \cup V_2 \cup V_3$ where $V_1,V_2,V_3$ are disjoint copies of $V$ and $s,t$ are two additional vertices. 
        For every $v \in V$, let $v_1 \in V_1, \ v_2 \in V_2, \ v_3 \in V_3$ denote its respective copies. $V_1, V_2, V_3$ are independent sets.
        \item The underlying edge set includes $\{ sv_1,v_3t : v \in V\} \cup \{ u_1v_2, v_1u_2, u_2v_3, v_2u_3 : \{u,v\} \in E\}$ (and symmetrical edges).
        \item For every $\{u,v\} \in E$, {edges $u_1v_2,v_1u_2,u_2v_3,v_2u_3$ and their symmetrical counterparts $v_2u_1,u_2v_1,v_3u_2,u_3v_2$} are all present during $[0,2mT]$ with delay $T/2 + w_{\{u,v\}}$. 
        \item Finally, let $E = \{e_1,e_2,\ldots,e_m\}$ be an enumeration of the edge set $E$. For each $i\in [m]$, let $u$ and $v$ be the endpoints of $e_i$ such that $u < v$. 
        {Then, the symmetrical edges $su_1,u_1s$ are both present at point $2(i-1)T$ with delay $T/2+w_{e_i}$, and the symmetrical edges $v_3t,tv_3$ are both present during $[2(i-1)T,2iT]$ with delay $T/2$.} 
    \end{itemize}
    
\begin{figure}[!ht]
    \centering
    \begin{tikzpicture}[scale=1.2, every node/.style={circle, draw, minimum size=8mm, font=\small}, >=Stealth]

        \node (s) at (-1, 4) {$s$};
        \node (t) at (9, 4) {$t$};

        \node (w1) at (1.5, 6) {$x_1$};
        \node (v1) at (1.5, 4) {$v_1$};
        \node (u1) at (1.5, 2) {$u_1$};

        \node (w2) at (4, 6) {$x_2$};
        \node (v2) at (4, 4) {$v_2$};
        \node (u2) at (4, 2) {$u_2$};

        \node (w3) at (6.5, 6) {$x_3$};
        \node (v3) at (6.5, 4) {$v_3$};
        \node (u3) at (6.5, 2) {$u_3$};

        \draw[thick, dotted, rounded corners, blue!50] (1.1, 6.5) rectangle (1.9, 1.5);
        \draw[thick, dotted, rounded corners, blue!50] (3.6, 6.5) rectangle (4.4, 1.5);
        \draw[thick, dotted, rounded corners, blue!50] (6.1, 6.5) rectangle (6.9, 1.5);

        \node[above=0.2cm, font=\large, text=blue!50, draw=none] at (1.5, 6.5) {$V_1$};
        \node[above=0.2cm, font=\large, text=blue!50, draw=none] at (4, 6.5) {$V_2$};
        \node[above=0.2cm, font=\large, text=blue!50, draw=none] at (6.5, 6.5) {$V_3$};

        \foreach \i/\j in {u/v, v/w, w/v, v/u, w/u} {
            \draw[thick, blue] (\i1) -- (\j2);
        }
        \draw[thick, blue] (u1) -- 
        node[midway, near end, shift={(0.8,0)}, draw=none, font=\large] {$\frac{T}{2}\!+\!w_{\{u,x\}}$}
        (w2);

        \foreach \i/\j in {u/w, u/v, v/w, v/u, w/u} {
            \draw[thick, blue] (\i2) -- (\j3);
        }
        \draw[thick, blue] (w2) -- 
        node[midway, near start, shift={(0.4,0.5)}, draw=none, font=\large] {$\frac{T}{2}\!+\!w_{\{x,v\}}$}
        (v3);

        \draw[thick, dashed] (s) -- (w1);
        \draw[thick, dashed] (s) -- (v1);
        \draw[thick, dashed] (s) -- 
        node[midway, left, shift={(0.4,-0.2)}, draw=none, font=\large] {$\frac{T}{2}+ w_{\{u,v\}}$}
        (u1);

        \draw[thick, dashed] (w3) -- (t);
        \draw[thick, dashed] (v3) -- 
        node[midway, above, shift={(-0.2,-0.15)}, draw=none, font=\large] {$\frac{T}{2}$}
        (t);
        \draw[thick, dashed] (u3) -- (t);

    \end{tikzpicture}
    \caption{
    Part of the underlying graph of $H_G$ corresponding to a triangle $u,v,x$ contained in $G$ assuming that the undirected edge $e_i\in E$ is $\{u,v\}$ with $u<v$. Recall that each undirected edge $\{y,z\}$ of the figure corresponds to two symmetric edges $yz$ and $zy$ of $H_G$. Here, the given labels indicate the delays of the corresponding temporal edges appearing during interval $[2(i-1)T,2iT)$. Plain blue edges are present during interval $[0,2mT]$. Dashed edges are more restricted: in particular, edges $su_1$ and $u_1s$ are present at point $2(i-1)T$, while edges $v_3t$ and $tv_3$ are present during interval $[2(i-1)T,2iT]$. Note that these edges  can have other appearances depending on other neighbors of $u$ and $v$ in $G$. 
    }
    \label{fig:negative_triangle_no_weights}
\end{figure}
    
    Note that $H_G$ has $3n+2$ vertices and $M = 12m$ temporal edges (counting also symmetrical temporal edges), and the construction thus takes linear time. 
    Figure~\ref{fig:negative_triangle_no_weights} shows a schematic view of the resulting temporal graph.
    
    We claim that there exists a temporal $st$-path in $H_G$ with duration less than $2T$ if and only if $G$ contains a negative triangle.
    For that, let us fix a temporal $st$-path $P$ of minimum duration. 
    By the construction of $H_G$, its first temporal edge is present at time $2(i-1)T$ for some $i\in [m]$. Let $u,v$ be the endpoints of $e_i$ so that $P$ starts with edge $su_1$, i.e. $e_i=\{u,v\}$ with $u<v$. If the duration of $P$ is less than $2T$, then $P$ ends with the only edge with head $t$ appearing in $[2(i-1)T,2iT)$, that is edge $v_3t$. 
    Let $k < n$ be the number of edges of $P$. The duration of $P$ must be at least $k(T/2 -  \max_{e \in E} |w_e|) = (k - k/n)T/2 > (k-1)T/2$ while also being less than $2T$. We thus have $k \le 4$.
    Furthermore, 
    we must have $k=4$ since the distance between $s$ and $t$ is at least 4 in the underlying graph of $H_G$.
    In this situation, there must exist a vertex $x$ such that the underlying edges of $P$ are $su_1,u_1x_2,x_2v_3,v_3t$, implying that $u,x,v$ is a triangle of $G$.
    As the temporal edges connecting $su_1$ and $v_3t$ have delays  $T/2+w_{e_i}$ and $T/2$ respectively, the duration of $P$ is at least $2T + w_{e_i}+w_{\{u,x\}}+w_{\{x,v\}}$, which is less than $2T$ if and only if the triangle $u,x,v$ is negative
    (see the corresponding path in Figure~\ref{fig:negative_triangle_no_weights} for an illustration).
    Conversely, if $u,x,v$ is a negative triangle, then up to reordering we can assume that its edges are $e_1,e_2,e_3$. Then, there exists a temporal $st$-path that starts at time $0$, has no waiting time, and has duration less than $2T$.
    Therefore, the claim is proved.
\end{proof}

{\bf Remarks.} Known reduction of what to negative triangle detection usually requires weights in the range $\set{-n^c,\ldots,n^c}$ for some large enough integer $c$~\cite{WilliamsW2018}. It yields an $\Omega((nm)^{1-\eps})$ bound in the regime $m=\Theta(n^2)$, implying that our lower-bound holds in the regime where the underlying graph has $\Theta(n^2)$ edges. 

\smallskip

The following hardness result is inspired from that of Theorem~\ref{thm:hardness-apsp}, but it requires some adjustments (and a different complexity hypothesis) in order to cope with uniform delay one.

\begin{theorem}\label{thm:hardness-triangle}
    If, for every $\varepsilon > 0$, there is no combinatorial ${\cal O}(n^{3-\varepsilon})$-time algorithm for detecting a triangle in an undirected graph, then for every $\eps' > 0$, there is no combinatorial ${\cal O}((nM)^{1-\eps'})$-time algorithm for computing a fastest temporal path in a temporal graph, even if 
    the temporal graph is undirected and has uniform delay one.
\end{theorem}

\begin{proof}
    Let $G=(V,E)$ be an arbitrary unweighted graph.
    Without loss of generality, $V = \{0,1,2,\ldots,n-1\}$ and $G$ is loopless.
    Let $N$ be some large enough constant (say, $N=10$).
    The undirected temporal graph $H_G$ is constructed from $G$ as follows:
    \begin{itemize}
        \item The vertex set is $\{s,t\} \cup V_1 \cup V_2 \cup V_3$ where $V_1,V_2,V_3$ are disjoint copies of $V$. 
        For every $v \in V$, let $v_1 \in V_1, \ v_2 \in V_2, \ v_3 \in V_3$ denote its copies.
        \item The underlying edge set includes $\{ sv_1,v_3t : v \in V\} \cup \{ u_1v_2, u_2v_3 : \{u,v\} \in E\}$ (and symmetrical edges). 
        \item For every $\{u,v\} \in E$, edges $u_1v_2,u_2v_3$ and their symmetrical counterparts $v_2u_1,v_3u_2$ are all present during $[0,n\cdot N+3]$. 
        \item For every $v \in V$, the symmetrical edges $sv_1,v_1s$ are both present at point 
        $v \cdot N$, 
        while the symmetrical edges $v_3t,tv_3$ are both present at point $u \cdot N+3$ for each $u \in N(v)$. 
        \item Finally, all temporal edges have delay one.
    \end{itemize}
    Note that we can construct $H_G$ from $G$ in ${\cal O}(n+m)$ time.

    We claim that $G$ contains a triangle if and only if the shortest duration for reaching $t$ from $s$ equals $4$.
    Indeed, if $u,v,w$ is a triangle of $G$, then there exists a temporal path starting at time $u \cdot N$, going by vertices $s,u_1,v_2,w_3,t$, with zero waiting time and total duration $4$.
    Conversely, assume the existence of a temporal path $P$ from $s$ to $t$, with total duration at most $4$.
    As edges have delay 1, it must have 4 edges at most, which is the least it can have as $V_1,V_2,V_3$ are vertex separators of the underlying graph (i.e. the removal of any $V_i$ for $i\in [3]$ disconnects $s$ from $t$).
    The underlying path of $P$ must thus be $s,u_1,v_2,w_3,t$ for some triple $u,v,w\in V$ satisfying $\{u,v\},\{v,w\}\in E$. This path has
    $su_1$ and $w_3t$ as first and last edges respectively.
    Then, the starting time of $P$ must be $u \cdot N$ and edge $w_3t$ must be present at time $u \cdot N+3$ which happens only when $u\in N(w)$.
    Since $G$ is loopless, $u,v,w$ are pairwise distinct.
    Therefore, $G$ contains the triangle $u,v,w$. This proves the claim.
\end{proof}

Note that the above proof does not hold with delay zero. If we consider zero delays in the above reduction, then the fastest path from $s$ to $t$ has duration 3 if and only if $G$ contains an odd cycle which is an easier problem than detecting a triangle (it can be tested in linear time). However, if we construct a directed temporal graph $H'$ where all underlying edges are directed from $s$ to $V_1$, from $V_1$ to $V_2$, from $V_2$ to $V_3$ and from $V_3$ to $t$, then it holds that $H'$ contains a temporal $st$-path of duration 3 if and only if $G$ contains a triangle.
We can thus state the following Theorem, its proof is almost identical to that of Theorem~\ref{thm:hardness-triangle} and is omitted.



\begin{theorem}\label{thm:hardness-triangle-dir}
    If, for every $\varepsilon > 0$, there is no combinatorial ${\cal O}(n^{3-\varepsilon})$-time algorithm for detecting a triangle in an undirected graph, then for every $\varepsilon' > 0$, there is no combinatorial ${\cal O}((nM)^{1-\varepsilon'})$-time algorithm for computing a fastest temporal path in a directed temporal graph with uniform delay zero.
\end{theorem}

\subsection{Fastest temporal path in an undirected temporal graph with uniform delay zero}

We obtain an algorithm for computing a fastest temporal path by solving the more complex profile problem.
Given a temporal graph $G$ and a pair of distinct vertices $s$ and $t$ in this graph, we  define the \emph{$st$-profile $P_{st}$} as the function $P_{st}(\tau)$ associating each departure time $\tau$ from $s$ to the earliest arrival time in $t$; in other words, $P_{st}(\tau)$ is the minimum arrival time of any temporal path leaving $s$ at a time no earlier than $\tau$. This function is non-decreasing and piece-wise linear. Furthermore, the slope of each linear piece is either zero or one. The reason is that we can define a profile function $E_{uv}$ for each underlying edge $uv$. The slope is one in intervals where the edge is present: for each temporal edge $(u,v,\tau_1,\tau_2,\delta)$, the earliest arrival time in $v$ is $\tau+\delta$ for $\tau\in [\tau_1,\tau_2]$. The slope is zero in intervals where it is not present, as the earliest arrival time is $\tau_1+\delta$ for $\tau<\tau_1$ if the next presence of $uv$ is during interval $[\tau_1,\tau_2]$ with delay $\delta$. Similarly, the profile function $f_P$ of a path $P=v_1,\ldots,v_k$ can be obtained by composition of $E_{v_1v_2},\ldots, E_{v_{k-1}v_k}$, resulting in a polyline whose slopes are obtained by multiplying slopes of the composing function and must be in $\set{0,1}$. Finally, if $\cal P$ denotes the set of all paths from $s$ to $t$ in the underlying graph, we have $P_{st}=\min_{P\in \cal P} f_P$ which is thus also a polyline with slopes in $\set{0,1}$. See \cite{DehneOS2012} for more details.

One way to represent this function is by a collection of triples $\{(\alpha_i,\beta_i,s_i)\}_{i=1}^{k}$, where $(\alpha_i)_{i=1}^k$ is non-decreasing, $\beta_i = P_{st}(\alpha_i)$ and for every $\tau \in (\alpha_{i-1}, \alpha_i]$, we have $P_{st}(\tau) = \beta_i + s_i(\tau - \alpha_i)$ where we define $\alpha_0=-\infty$. The minimum collection of such triples is what we call a minimal representation of the $st$-profile. Note that when $s=t$, the profile is the identity function and requires a specific representation. For a detailed study of a general algorithm to find the profile of a pair of vertices in an interval temporal graph with arbitrary delays in $O((n\log n + m)M)$ time, we refer the reader to \cite{DehneOS2012}.

\begin{theorem}\label{thm:zero-delay}
    Given an undirected interval temporal graph $G$ with no isolated vertices and $M$ temporal edges having all zero delay, and two vertices $s$ and $t$, it is possible to compute in $\widetilde{\cal O}(M)$ time a fastest temporal path from $s$ to $t$. Furthermore, a minimal representation of the $st$-profile $P_{st}$ can be obtained with the same complexity.
\end{theorem}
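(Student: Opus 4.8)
The key observation is that with all delays equal to zero, a temporal $st$-path leaving $s$ at time $\tau$ and arriving at $t$ at time $\tau' \ge \tau$ corresponds exactly to a path in the ``snapshot'' graph available during $[\tau,\tau']$ — more precisely, the edges of such a path, when sorted by their traversal times, give a nested sequence of time windows, but since delays are zero the only constraint is that consecutive edges are traversed non-decreasingly in time and each within its own presence interval. So the profile $P_{st}(\tau)$ is the earliest time $\tau'$ such that $s$ and $t$ lie in the same connected component of the graph whose edge set is $\{uv : uv \text{ is present at some time in } [\tau,\tau']\}$, i.e. of the graph formed by all temporal edges whose presence interval intersects $[\tau,\tau']$. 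This reframes the problem as a question about the evolution of connected components as a time window $[\tau,\tau']$ grows or slides, which is amenable to classical dynamic-connectivity / union–find machinery.

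First I would reduce to a clean combinatorial object: build a sorted list of $O(M)$ events, one for the left endpoint $\mathrm{start}(e)$ and one for the right endpoint $\mathrm{end}(e)$ of each temporal edge $e$, breaking ties consistently. The plan is then to compute the profile by a sweep. Observe that $P_{st}$ is non-decreasing, piecewise linear with slopes in $\{0,1\}$ (as recalled before the theorem), so its minimal representation has $O(M)$ breakpoints and it suffices to identify them. I would sweep the departure time $\tau$ from $-\infty$ to $+\infty$; for a fixed $\tau$, the value $P_{st}(\tau)$ is determined by when, scanning arrival times $\tau' \ge \tau$ upward, $s$ and $t$ first become connected using edges present in $[\tau,\tau']$. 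The slope-$1$ pieces of $P_{st}$ correspond to intervals of $\tau$ on which the ``witnessing'' structure is rigid and arrival simply tracks departure; the slope-$0$ pieces correspond to plateaus where $P_{st}$ is pinned to some fixed ending time $\mathrm{end}(e)$ of a critical edge. Concretely I would process departure times in decreasing order: maintain an incremental union–find over the edges currently ``active'' (those whose presence interval contains the current $\tau$), together with, for the component structure, the minimal arrival time needed to connect $s$ to $t$ — this minimal arrival is a min over paths of the max over edges of $\mathrm{end}(e)$, restricted appropriately, which is exactly a bottleneck-style quantity and can be maintained with a union–find augmented to store, per component and per pair $(s,t)$-relevant bridge, the relevant ``deadline''. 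When $\tau$ decreases past a $\mathrm{start}(e)$ we insert $e$; when it decreases past an $\mathrm{end}(e)$ we would need to delete $e$, which is the expensive direction, so instead I would run the sweep in the direction that only requires insertions by reversing time, or use the standard offline trick (à la the ``union–find with rollback'' or the segment-tree-over-time decomposition) to handle each edge's active interval as a contiguous range of sweep steps, paying only $\widetilde{\mathcal O}(M)$ total.

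The main obstacle, and the place where care is needed, is maintaining the bottleneck/deadline information efficiently under these union operations, because what we want is not merely whether $s$ and $t$ are connected at a given window but the \emph{minimal} ending time over all connecting paths — this is a dynamic minimum-bottleneck-path query between two fixed terminals. I would handle this by noting that the relevant quantity, as $\tau$ sweeps, changes only $O(M)$ times, and by processing the edges in order of $\mathrm{end}(e)$ so that a Kruskal-like union–find computes, for each threshold $\theta$, whether $s$ and $t$ are connected using only edges with $\mathrm{end}(e) \ge \theta$ that are also active at $\tau$; combined over all $\tau$ via the interval-decomposition trick this yields the full breakpoint set of $P_{st}$. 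Once the profile is computed, recovering an actual fastest path is routine: take the $\tau$ achieving the minimum duration $\min_\tau (P_{st}(\tau) - \tau)$, then trace back through the union–find merge forest to extract a connecting path and order its edges by a valid traversal schedule. I expect the bookkeeping to combine two well-known ingredients — incremental union–find with rollback, and Kruskal-style bottleneck computation — and the remaining work is to argue the zero-delay semantics really does collapse ``temporal path'' to ``path in a time-windowed static graph'', which is where undirectedness is used (so that the component containing $s$ is a symmetric notion and loops can be freely removed).
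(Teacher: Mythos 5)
There is a genuine gap, and it is in your very first step. The claim that $P_{st}(\tau)$ equals the earliest $\tau'$ such that $s$ and $t$ are connected in the static graph of all edges present at \emph{some} time in $[\tau,\tau']$ is false: the non-decreasing-traversal-time constraint along the path does not disappear just because delays are zero. Consider $s$--$a$ present only during $[5,6]$ and $a$--$t$ present only during $[1,2]$ (plus, say, again at time $10$). The union graph over $[1,5]$ already connects $s$ to $t$, yet no temporal $st$-path departing at time $\ge 1$ can arrive before time $10$, since the edge $at$ must be traversed \emph{after} $sa$. In general, feasibility of a fixed edge sequence $e_1,\dots,e_k$ within a window requires the prefix condition $\max_{j\le i}\mathrm{start}(e_j)\le \mathrm{end}(e_i)$ for every $i$, and the earliest arrival is $\max(\tau,\max_i \mathrm{start}(e_i))$; this is an order-dependent prefix-max condition, not static connectivity and not the min-over-paths-of-max-of-$\mathrm{end}(e)$ bottleneck you propose to maintain with a Kruskal-style union--find. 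The machinery you build on top of this reduction (offline segment-tree-over-time decomposition with rollback union--find) therefore computes the wrong function.

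The correct substitute, which is what the paper does, is a single \emph{chronological} sweep over the $O(M)$ start/end events, maintaining fully dynamic connectivity (Holm--de Lichtenberg--Thorup) on the graph of edges present \emph{at the current instant}, together with a label $LDT[c]$ on each component $c$ recording the latest departure time from $s$ that reaches $c$ by the current time; on a merge the new component takes the max of the two labels, on a split both parts inherit the label, and the component of $s$ always carries the current time. Breakpoints of $P_{st}$ are emitted whenever the label of $t$'s component increases. Note that this forces genuinely fully dynamic connectivity rather than your rollback/offline trick: the $LDT$ labels must be propagated forward in time through both insertions and deletions, and the segment-tree decomposition does not visit time chronologically, so it cannot carry this state. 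Your final step (recovering an actual path from the optimal departure time) is fine and matches the paper, but the core of the argument needs to be replaced.
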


To prove the above theorem, we propose a profile algorithm that, given the list of temporal edges of an interval temporal graph $G=(V,E)$ and two distinct vertices $s,t\in V$, computes a representation of the profile function $P_{st}$ from $s$ to $t$. The idea of the algorithm is to perform a time scan of the beginnings and ends of all temporal edges while maintaining connected components for edges present at each time $\tau$ considered. We also maintain for each connected component $c$ the last departure time $LDT[c]$ allowing us to reach it at time $\tau$ from $s$. In other words, $LDT[c]$ is the maximum time $\lambda$ such that there exists a temporal path from $s$ to any vertex $w$ in the component that starts at time $\lambda$ and arrives no later than time $\tau$. Note that this definition does not depend on $w$: such a temporal path to $w$ can be extended to a temporal walk that arrives at any other vertex of the component at time $\tau$ since all edges of the component are present at time $\tau$ and have zero delay.
(And a temporal path can be extracted from that temporal walk by waiting instead of following any loop.)

For that purpose we consider each temporal edge $(u,v,\tau_1,\tau_2,0)$ as two \emph{events}: a \emph{beginning event} (resp. \emph{ending event}) which occurs when the edge begins at time $\tau_1$ (resp. ends at time $\tau_2$) and which is defined as the quintuple $(u,v,\tau_1,\text{begin},\tau_2)$ (resp. $(u,v,\tau_2,\text{end},\tau_2)$). As we consider an undirected temporal graph $G$, we generate only one beginning event and one ending event for each pair of symmetrical temporal edges. We assume that the temporal graph $G$ is given as a sorted event list, ordered by the time of the events (i.e., the third coordinate). If several events share the same time, beginning events appear before ending events, breaking ties arbitrarily among events of the same type. Note that this list can easily be obtained from the list of temporal edges of $G$ in $O(M\log M)$ time using sorting. 


To maintain connected components as we scan edge events, we could use a dynamic connectivity algorithm~\cite{HolmLT2001,Eppstein1994} (either fully dynamic or offline). However, we note that can we can rely on the simpler dynamic tree algorithm of \cite{SleatorT1983}. The reason is that the ending times of edges are known in advance. We can thus maintain for each connected component a spanning tree consisting of edges with greatest ending times as we now explain. More precisely, we maintain a maximum-cost spanning tree of each connected component, where the cost of an edge is defined as the ending time of the corresponding temporal edge.
Indeed, the data-structure of \cite{SleatorT1983} maintains a collection of dynamic trees on a fixed set $V$ of vertices where each edge of a tree is associated to a cost.
In particular, it allows to retrieve in logarithmic time the edge with minimum cost along the path from a node to the root of its tree. When adding an edge $\{u,v\}$, if $u$ and $v$ are already in the same tree $T$, we can thus find the edge $\{u',v'\}$ with smallest ending time along the cycle of $T\cup \{u,v\}$ and cut $T$ by deleting $\{u',v'\}$ before linking $u$ and $v$ (if $\{u',v'\}=\{u,v\}$, we do nothing). If $u$ and $v$ are not in the same tree, we link the two trees according to \cite{SleatorT1983}. We let $CC\_AddEdge(\{u,v\},\tau_2)$ denote the resulting procedure for adding an edge $\{u,v\}$ with ending time $\tau_2$. We let $CC\_RemoveEdge(\{u,v\})$ denote the procedure that cuts the tree containing $u$ and $v$ according to \cite{SleatorT1983} if one vertex is parent of the other (and does nothing otherwise). We further let $CC(u)$ denote the procedure that returns the ID of the root of the tree containing $u$ according to \cite{SleatorT1983}. We use this ID (a number between 1 and $n$) to identify the component spanned by the tree, so that the expression $CC(u)=CC(v)$ allows us to test whether $u$ and $v$ are in the same connected component. All these procedures can be implemented in $O(\log n)$ time~\cite{SleatorT1983}. We let $CC\_Init(V)$ denote the procedure that creates $n$ single-node trees, one for each vertex in $V$ (in linear time).  Processing the event list in order, we can thus update the dynamic trees representing the connected components of the graph of edges present at the time of each event. 

\begin{algorithm}[t]
    \Input{A temporal graph $G$ given by a sorted list $E$ of edge events with vertex set $V=[n]$, a source node $s\in V$ and a target node $t\in V\setminus\set{s}$.}
    \Output{The $st$-profile.}
    $Prof:=\emptyset$ \tcp*{initial $st$-profile}
    $CC\_Init(V)$ \tcp*{initially empty graph}
    $LDT[c]:=-\infty$ for all $c\in [n]$ \tcp*{last departure time of a walk from $s$ to any vertex in a connected component $c$}
    \For{$e=(u,v,\tau,evt,\tau_2)\in E$}{
        $LDT[CC(s)]:= \tau$  \tcp*{$s$ can reach its component at time $\tau$} \label{lin:source}  
        $\lambda_t:=LDT[CC(t)]$ \tcp*{last departure time to reach $t$}
        $st\_connected:= CC(s) = CC(t)$\tcp*{true if $s$ and $t$ are connected}
        \uIf(\tcp*[h]{undirected edge $\{u,v\}$ appears}){$evt=\text{begin}$}{
            $\lambda := \max\set{LDT[CC(u)], LDT[CC(v)]}$\label{lin:max}\\
            $CC\_AddEdge(\{u,v\},\tau_2)$\\
            $c := CC(u)$ \tcp*{connected component containing $u$ and $v$}
            $LDT[c]:=\lambda$ \label{lin:merge} \\
            \lIf{$CC(t)=c$ 
                 and $\lambda>\lambda_t$}{$Prof:=Prof\cup\set{(\lambda, \tau, 0)}$
                 }\label{lin:plateau}
        }\Else(\tcp*[h]{undirected edge $\{u,v\}$ disappears}){
            $c := CC(u)$ \tcp*{connected component containing $u$ and $v$}
            $\lambda := LDT[c]$\\
            $CC\_RemoveEdge(\{u,v\})$\\
            $LDT[CC(u)]:=\lambda$\\
            $LDT[CC(v)]:=\lambda$\\
        }
        \lIf{$st\_connected$}{$Prof:=Prof\cup\set{(\tau, \tau, 1)}$
        \label{lin:connected}} 
        }
    \Return $Prof$
    \caption{One-to-one profile.}
    \label{alg:profile}
\end{algorithm}

The update of last departure times is then rather simple. When two components merge because of the appearance of an edge $uv$ at time $\tau$, the last departure time $LDT[c]$ of the resulting component $c$ is set to the maximum of the last departure times of the components of $u$ and $v$ before merging since a temporal path reaching one component can now be extended to reach any node in the other component. When a component splits because of the disappearance of an edge $uv$ at time $\tau$, the two new components get same last departure time as the component before splitting since temporal paths reaching that component arrive at time $\tau$ or before, and waiting is then possible. Special care has to be taken concerning the connected component of the source $s$ for which the last departure time is always the time $\tau$ of the current event. Obtaining the last departure time $\lambda$ of the connected component of $t$ after an event at time $\tau$ basically indicates that $P_{st}(\lambda)\le\tau$. We will prove that  when a higher value of $\lambda$ is observed for $t$, $\tau$ corresponds to the earliest arrival time when departing at $\lambda$ and we then have $P_{st}(\lambda)=\tau$. Note that $\tau-\lambda$ is then the duration of the corresponding temporal path from $s$ to $t$. Each time an event affects the connected component of the target $t$, we update a list $Prof$ of triples accordingly so that it represents the profile $P_{st}$ up to that event.
%
See the pseudo-code of Algorithm~\ref{alg:profile} for more details.

\begin{proposition}\label{prop:profile}
    Given an undirected interval temporal graph $G$ with with no isolated vertices and $M$ temporal edges having all zero delay, and two vertices $s$ and $t$, Algorithm~\ref{alg:profile} computes a representation of the $st$-profile in $O(M(\log M+\log n))$ time.
\end{proposition}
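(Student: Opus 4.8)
The plan is to establish correctness and running time separately. For correctness, I would first argue the main invariant maintained by the algorithm: after processing each event at time $\tau$, the dynamic connectivity structure represents exactly the graph of underlying edges present at time $\tau$ (with the tie-breaking convention ensuring starts are applied before ends at equal times), and for every connected component $c$ of that graph, $LDT[c]$ equals the maximum departure time $\lambda$ from $s$ such that some temporal walk leaving $s$ at time $\lambda$ reaches some (equivalently, every) vertex of $c$ by time $\tau$. I would prove this by induction on the events. The base case is the empty graph. For the inductive step there are two cases. When an edge $uv$ appears, any walk reaching the old component of $u$ or of $v$ by time $\tau$ can be extended across $uv$ (which has zero delay and is now present) to reach the merged component by time $\tau$, and no better departure time is created, so the $\max$ on line~\ref{lin:max} is correct; the special handling of $CC(s)$ on line~\ref{lin:source} accounts for the trivial length-zero walk $s$-to-$s$ departing at $\tau$. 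When an edge disappears, the walks witnessing $LDT[c]$ used only edges present strictly before time $\tau$, hence remain valid, so both halves inherit the old value. The convention that $LDT[CC(s)]$ is reset to $\tau$ at the top of every iteration is what makes the source component always reflect the "wait at $s$" option.

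Next I would connect the invariant to the profile output. The key claim is: $P_{st}(\lambda) = \tau$ exactly when $\lambda$ first becomes the value $LDT[CC(t)]$ at an event of time $\tau$ that strictly increases it (line~\ref{lin:plateau}), and that between two such events the profile has slope $1$ precisely on the maximal time windows during which $s$ and $t$ lie in the same component (line~\ref{lin:connected}). Concretely: if at time $\tau$ we have $LDT[CC(t)] = \lambda$ with $\lambda > \lambda_t$ (the previous value), then there is a temporal walk — hence, by the remark in the text, a temporal path — from $s$ departing at $\lambda$ and arriving by $\tau$, and by minimality of the previous value no path departing at time $> \lambda_t$ arrives before $\tau$; so $P_{st}$ jumps to value $\tau$ at argument $\lambda$, i.e. a plateau (slope $0$) triple $(\lambda,\tau,0)$. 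When $s$ and $t$ are in the same component throughout $(\tau,\tau')$ for consecutive events, any departure time in that window reaches $t$ at that same time (zero delays), giving slope $1$; the triple $(\tau,\tau,1)$ emitted on line~\ref{lin:connected} records the left endpoint of such a slope-$1$ segment. I would then check that the list $Prof$, read as the piecewise-linear function reconstructed from its triples per the representation defined just before the theorem, agrees with $P_{st}$ everywhere, and that it is \emph{minimal}: consecutive triples never have equal slope with a redundant breakpoint, because a new triple is appended only when the relevant quantity ($\lambda_t$ or the $st$-connectivity status) actually changes. This last point — verifying minimality rather than mere correctness of the represented function — is where I expect the bookkeeping to be most delicate, since one must rule out, e.g., two consecutive slope-$1$ triples or a slope-$0$ triple immediately following another with the same $\beta$; I would handle it by observing that line~\ref{lin:plateau} fires only on a strict increase of $\lambda_t$ and that a slope-$1$ run corresponds to a contiguous block of events with $s,t$ connected, emitting exactly one triple per event in the block but with the reconstruction rule collapsing them correctly — or, if the raw list is not literally minimal, by a trivial linear-time post-processing pass, which I would mention.

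For the running time: there are $2M$ events (or $M$ after merging symmetric pairs), the list is given sorted, and each iteration performs $O(1)$ calls to $CC(\cdot)$ at $O(\log n)$ each, at most one $CC\_AddEdge$ or $CC\_RemoveEdge$ at $O(\log^2 n)$ amortized each (by the guarantee of \cite{HolmLT2001}), $O(1)$ array accesses for $LDT$, and $O(1)$ appends to $Prof$. Summing over all events gives $O(M(\log M + \log^2 n))$, where the $\log M$ term absorbs the cost of producing the sorted event list if it is not already given. The main obstacle is not the complexity analysis, which is routine, but the correctness argument — specifically, proving the $LDT$ invariant cleanly across component splits (arguing that witnessing walks survive because they used only edges present before $\tau$) and then translating the invariant into the precise statement that the emitted triples form a \emph{minimal} representation of $P_{st}$.
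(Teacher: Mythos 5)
Your plan follows essentially the same route as the paper's proof: an induction over the sorted event list maintaining the invariant that $LDT[CC(w)]$ equals the latest departure time from $s$ reaching $w$'s current component, a case split on starting versus ending events (merges take the max, splits inherit the old value, the source component is refreshed to the current time), a translation of strict increases of $LDT[CC(t)]$ into slope-$0$ triples and of $st$-connected stretches into slope-$1$ triples, and the routine amortized bound from the dynamic-connectivity data structure. The only small caveat is your justification for the split case ("walks used only edges present strictly before $\tau$" — they may in fact use edges present at $\tau$, including the one being removed, which is still present at its ending time), but the conclusion you draw is the correct one and matches the paper's more careful case analysis.
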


First note that Theorem~\ref{thm:zero-delay} easily follows from Proposition~\ref{prop:profile} since the duration of a fastest temporal $st$-path as well as the starting time $\lambda$ of such a fastest temporal path can easily be obtained by scanning the $st$-profile in $O(M)$ time to obtain the triple $(\lambda,\tau,s)$ minimizing $\tau-\lambda$. We can then find a fastest temporal $st$-path departing at time $\lambda$ by computing an earliest-arrival $st$-path departing at time $\lambda$ in $O(M\log M)$ time. Indeed, after  an $O(M\log M)$-time pre-processing for obtaining for each underlying edge $uv$ the sorted list of temporal edges connecting $uv$, we can compute such a temporal path through a temporal version of Dijkstra's algorithm~\cite{Dreyfus1969,BuiXuanFJ2003} in $O(m\log M + n\log n)$ time. Note that the $\log M$ factor accounts for the time required to find with binary search the first temporal edge connecting $uv$ that can be traversed at a certain time $\tau$ for a given neighbor $v$ (in the underlying graph) of a node $u$ reached at time $\tau$. The pre-processing also includes removing the overlaps between the temporal edges corresponding to various appearances of the same edge, see~\cite{jain2022algorithms} for more details.
As a final remark, a minimal representation of the profile can easily be obtained from the computed one through a linear-time post-processing scan where we merge two consecutive triples if they correspond to the same line.

\begin{proof}[Proof of Proposition~\ref{prop:profile}]
    Considering all events up to an event $e\in E$, let $G^e$ be the temporal graph with the same set of vertices as $G$ and all temporal edges $(u,v,\tau_1,\tau_2)$ of $G$ such that the event $(u,v,\tau_1,\text{begin},\tau_2)$ is not after $e$ in the event list $E$. 
    The temporal edges of $G^e$ whose ending events come after $e$ in $E$, are called \emph{unclosed temporal edges} as their ending events have not been processed yet. 
    We let $\tau^e$ denote the time at which event $e$ occurs.
    We define the last departure time $\lambda^e(w)$ from $s$ to reach $w$ at $e$ or before in $G^e$ as the maximum time $\lambda$ such that there exists a temporal $sw$-walk departing at time $\lambda$ in $G^e$ and arriving no later than $\tau^e$. 
    Similarly, we use $CC^e$, $LDT^e$ and $Prof^e$ to refer to the values stored in $CC$, $LDT$ and $Prof$ respectively after the algorithm has processed the event $e$. We aim at proving the following claim.

    \begin{claim}\label{clm:algo-ind-bis}
        For each event $e$, we have $LDT^e[CC^e(w)] = \lambda^e(w)$.
    \end{claim}
    
    We first note that this claim allows us to prove that Algorithm~\ref{alg:profile} correctly identifies a triple $(\lambda,\tau,s)$ corresponding to a fastest temporal $st$-path with duration $\tau-\lambda$. Consider a fastest temporal $st$-path $P^*$ and let $\lambda^*$ and $\tau^*$ denote its departure time from $s$ and arrival time in $t$ respectively. Among the suffixes of $P^*$ with temporal edges present at time $\tau^*$, consider the last beginning event $e$ of one of these temporal edges and consider its time $\tau^e\le\tau^*$. As $\tau^e < \tau^*$ would lead to a faster temporal $st$-path, we must have $\tau^e=\tau^*$.  As the suffix from the temporal edge associated to $e$ is present at time $\tau^e$ by the choice of $e$ and $t$ is in the connected component of the suffix when processing $e$, the claim implies $LDT^e[CC^e(t)] = \lambda^e(t)$ and $\lambda^e(t)=\lambda^*$ by definitions of $\lambda^e(t)$ and  $P^*$.
    If $\lambda^*>\lambda_t$, the triplet $(\lambda^*,\tau^*,0)$ is added to the profile at Line~\ref{lin:plateau}. Otherwise, a triplet $(\lambda_t,\tau',s)$ was added previously with $\lambda^*\le\lambda_t$ and $\tau'\le \tau^*$, implying $\tau'-\lambda_t\le\tau^*-\lambda^*$. As $P^*$ is a fastest temporal path, we must have $\tau'-\lambda_t=\tau^*-\lambda^*$.
    In both cases, Algorithm~\ref{alg:profile} correctly identifies a triple $(\lambda,\tau,s)$ corresponding to a fastest temporal $st$-path.
    Additionally, Claim~\ref{clm:algo-ind-bis} also implies that each triple $(\lambda,\tau,0)$ added to $Prof$ corresponds to a temporal path starting from $s$ at time $\lambda$ and arriving in $t$ at time $\tau$.
    Moreover, a triple $(\tau,\tau,1)$ is added to $Prof$ only if $s$ and $t$ are in the same connected component at time $\tau$. We can thus conclude that the minimum $\tau-\lambda$ over all triples $(\lambda,\tau,s)$ of $Prof$ is the duration of a fastest temporal $st$-path.
    
    \smallskip
    \textit{Proof of Claim~\ref{clm:algo-ind-bis}.}
    We prove Claim~\ref{clm:algo-ind-bis} by induction on the number of events processed. It relies mainly on the correctness of the dynamic connected component procedures which implies that $CC^e(w)\in [n]$ uniquely identifies the connected component of $w$ in the graph induced by the edges that are present in $G$ at time $\tau^e$ and whose ending events are after $e$ in $E$. The ordering of $E$, with all beginning events at a given time $\tau$ preceding all ending events at time $\tau$, implies that these temporal edges are exactly the unclosed temporal edges of $G^e$.
    In particular, when two vertices $u$ and $v$ are in the same connected component (i.e. $CC^e(u)=CC^e(v)$), they are connected through unclosed temporal edges of $G^e$. Note that this implies the existence of an instantaneous temporal $uv$-path that uses only unclosed temporal edges which are all traversed at the same time $\tau^e$.
    
    Initially, we consider an empty temporal graph $G^0$ and the claim is clearly satisfied for any vertex $w\not=s$. 
    Now, suppose that Claim~\ref{clm:algo-ind-bis} is satisfied for any vertex $w\not=s$ before scanning an event $f$ at time $\tau=\tau^f$. Note that the algorithm first sets $LDT[CC(s)]:=\tau$, so that the claim $LDT^f[CC^f(w)]=\lambda^f(w)$ is satisfied for $w=s$ and all vertices $w$ in $CC^f(s)$. Now consider any vertex $w$ outside of $CC^f(s)$. In that case, any temporal $sw$-path uses some temporal edge ending before $\tau^f$, implying $\lambda^f(w)\le\tau^{e}$ where $e$ denotes the event preceding $f$ (we use $e=0$ and $\tau^{e}=-\infty$ if $f$ is the first event). We distinguish two cases.
    
    \textit{Ending event case.}
    If $f$ corresponds to the disappearance of an edge $uv$, we have $G^f=G^{e}$, implying $\lambda^f(w)\ge \lambda^{e}(w)$. As we have $\lambda^f(w)\le\tau^{e}$ (since $w$ is outside of $CC^f(s)$), the corresponding temporal $sw$-path exists in $G^{e}$ and we have $\lambda^f(w)= \lambda^{e}(w)$.
    The updates in that case ensure that $LDT[CC(w)]$ keeps the same value (after the possible change of $CC(w)$) so that $LDT^f[CC^f(w)]=\lambda^{e}(w)=\lambda^f(w)$ still holds. 
    
    \textit{Beginning event case.}
    If $f$ corresponds to the appearance of an edge $uv$,
    $G^{f}$ differs from $G^{e}$ by the addition of the two temporal edges $f'=(u,v,\tau^f,\tau_2)$ and $f''=(v,u,\tau^f,\tau_2)$ associated to $f$. We thus have $\lambda^f(w)\ge \lambda^{e}(w)$, and $\lambda^f(w) > \lambda^{e}(w)$ can only occur if any temporal $sw$-path with departure time $\lambda^f(w)$ uses $f'$ or $f''$. Consider such a temporal $sw$-path departing at time $\lambda^f(w)$ and assume without loss of generality that it traverses $f'$ from $u$ to $v$, implying $\lambda^f(w)\le \lambda^{e}(u)$ by definition of $\lambda^{e}(u)$.
    As we set $LDT[CC(w)]$ to $\max\set{LDT[CC(u)], LDT[CC(v)]}$ in that case, we then have $LDT^f[CC^f(w)]\ge LDT^{e}[CC^{e}(u)]=\lambda^{e}(w)$ by induction hypothesis. Combining this with the previous inequality, we get $LDT^f[CC^f(w)]\ge \lambda^f(w)$.
    We also have $\lambda^f(w)\ge LDT^f[CC^f(w)]$ since the induction hypothesis implies the existence of a temporal $sx$-path $P$ to a node $x$ in $\{u,v\}$ departing at $\max\set{LDT^{e}[CC^{e}(u)], LDT^{e}[CC^{e}(v)]}=LDT^f[CC^f(w)]$ and arriving at $\tau^e$ or before such that $P$ can be extended by an instantaneous temporal $xw$-path at time $\tau^f\ge\tau^e$. We thus finally get $LDT^f[CC^f(w)]= \lambda^f(w)$ in that case too.
    Claim~\ref{clm:algo-ind-bis} thus follows by induction.

    \medskip
    We now turn to the proof of the profile computation.
    We denote by $P_{st}^e$ the $st$-profile in $G^e$ up to $\tau^e$ (i.e. with support restricted to $(-\infty,\tau^e]$) and let $last(P_{st}^e)$ denote the maximum time $\tau\le\tau^e$ for which $P_{st}^e(\tau)$ is defined (by convention $last(P_{st}^e)=-\infty$ when $G^e$ does not contain any temporal path from $s$ to $t$). We aim at proving the following claim.

    \begin{claim}\label{clm:algo-ind}
        For each event $e$, $Prof^e$ is a representation of $P_{st}^e$.
    \end{claim}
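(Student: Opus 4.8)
# Proof Proposal for Claim~\ref{clm:algo-ind}

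The plan is to prove Claim~\ref{clm:algo-ind} by induction on the position of the event $e$ in the sorted event list $E$. The base case is the first event; before it, all $LDT$ values are $-\infty$, and $G^e$ at the first starting event has at most one edge, so one checks directly that the claimed invariants hold (in particular, if that single edge touches $s$ we get the right departure time, and $Prof$ has recorded the appropriate triple exactly when $t$ lands in $s$'s component). For the inductive step, assume the invariant holds after processing the event $e'$ immediately preceding $e$, and analyze the two cases according to whether $e$ is a starting event or an ending event.

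For a \textbf{starting event} $e=(u,v,\tau,\text{start})$: the only change from $G^{e'}$ to $G^e$ is that edge $uv$ becomes present at time $\tau$ (and stays so at least at this instant). I would first argue that the line \ref{lin:source} update $LDT[CC(s)]:=\tau$ is correct, since $s$ trivially reaches itself (hence its whole current component, by the zero-delay argument from the text) at time $\tau$, and no earlier departure is forced. Then, merging the components of $u$ and $v$: any $sw$-walk in $G^e$ that arrives by $\tau^e=\tau$ either lies entirely in $G^{e'}$ (so is governed by the old $LDT$ values) or uses the new edge $uv$ at time $\tau$; in the latter case it decomposes into a walk to $\{u,v\}$ arriving by $\tau$ followed by a traversal of $uv$ and movement within the merged component. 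This is exactly why $\lambda:=\max\{LDT[CC(u)],LDT[CC(v)]\}$ on line~\ref{lin:max} is the correct new last departure time for every vertex of the merged component $c$, which justifies line~\ref{lin:merge}. For the profile: the interesting case is when $t$ lands in $c$ and the new value $\lambda$ strictly exceeds the old $\lambda_t=LDT[CC^{e'}(t)]$. I would show that in this case $P^e_{st}(\lambda)=\tau$ exactly — the inequality $P^e_{st}(\lambda)\le\tau$ is immediate since we exhibited a walk, and for the reverse one uses that before this event no $st$-walk departing at time $\lambda$ could arrive by any time $<\tau$ (otherwise $\lambda_t$ would already have been at least $\lambda$). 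So appending $(\lambda,\tau,0)$ on line~\ref{lin:plateau} extends the representation correctly, adding the flat piece on $(\lambda_t,\lambda]$ with arrival $\tau$ (using that between $\lambda_t$ and $\lambda$ no new arrivals are possible). Finally, the line~\ref{lin:connected} update when $st\_connected$ held at the start of the iteration (i.e.\ $s$ and $t$ were in the same component just before processing $e$, as computed on the line defining $st\_connected$) records the slope-one piece $(\tau,\tau,1)$, reflecting that departing at $\tau$ reaches $t$ by $\tau$.

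For an \textbf{ending event} $e=(u,v,\tau,\text{end})$: edge $uv$ disappears at time $\tau$, so the component $c$ containing $u,v$ may split into two. The key observation is that any $sw$-walk arriving by $\tau^e=\tau$ that was valid before remains valid (edges of $G^{e'}$ present at $\tau$ are still usable at the instant $\tau$, and removal only affects strictly later times), so both resulting components inherit $LDT[c]$ — which is what the three assignments after $CC\_RemoveEdge$ do. One must also check line~\ref{lin:source} and the $\lambda_t$/$st\_connected$ bookkeeping behave correctly here, and that when $s,t$ are still connected at the start of the iteration, recording $(\tau,\tau,1)$ is again correct. It should also be verified that no new $st$-profile point needs to be recorded at an ending event beyond the possible slope-one piece, since losing an edge cannot create earlier arrivals. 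Throughout both cases, I would maintain as part of the induction the bookkeeping fact that $\lambda_t$ read at the top of the iteration equals $last(P^{e'}_{st})$'s associated departure time, so that the comparison $\lambda>\lambda_t$ correctly detects genuinely new flat pieces and avoids redundant triples — this is what makes the output a \emph{minimal} representation rather than merely a correct one.

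The main obstacle I anticipate is twofold. First, proving \emph{minimality} rather than just correctness: one has to argue that every triple appended is forced (no two consecutive pieces could be merged), which requires carefully tracking that slope-zero pieces are emitted only on strict increase of the reachable departure time and that slope-one pieces genuinely alternate with them; a delicate point is what happens when several events share the same timestamp $\tau$ — the tie-breaking rule (starting events before ending events) must be shown to prevent spurious zero-length pieces and to make the "flat then unit slope" pattern consistent. Second, the zero-delay/undirectedness is used crucially and silently in the statement "$LDT[c]$ does not depend on the vertex $w$ in the component"; I would make explicit where this fails for directed graphs or positive delays, so that the reader sees the invariant $LDT^e[CC^e(w)]=\lambda^e(w)$ is genuinely well-posed (the right-hand side a priori depends on $w$, and the content of the claim is that it does not). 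The running time bound of $O(M(\log M+\log^2 n))$ then follows immediately: $O(M\log M)$ for the initial sort, $O(\log^2 n)$ amortized per event for the $2M$ dynamic-connectivity updates, and $O(\log n)$ per event for the $CC$ queries and constant-work profile updates.
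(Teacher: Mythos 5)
Your overall strategy -- induction over the sorted event list, a case split on starting versus ending events, the $\max$-merge rule for appearing edges, inheritance of $LDT$ on splits, and tying profile triples to changes in $t$'s component -- is exactly the paper's proof, including the auxiliary observations you flag (zero delay and undirectedness making $\lambda^e(w)$ constant on components, and the tie-breaking rule making the ``unclosed'' edges at time $\tau$ well defined). Your worry about \emph{minimality} is largely beside the point for this claim, which only asserts that $Prof^e$ is \emph{a} representation of $P_{st}^e$; the paper's induction likewise proves correctness of the representation and does not belabor minimality.

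The one substantive gap is in your ending-event case. You argue that walks valid before the event ``remain valid,'' so both split components inherit $LDT[c]$ -- but that only shows the new last departure times are \emph{at least} the old ones. The invariant $LDT^e[CC^e(w)]=\lambda^e(w)$ also requires an upper bound: even though the edge set does not change at an ending event, the reference time advances from the previous event's time $\tau^{e'}$ to $\tau$, so $\lambda^e(w)$ could in principle strictly increase for some $w$ (a walk may now be allowed to arrive as late as $\tau$ rather than $\tau^{e'}$). The paper closes this with a compression argument: in any $sw$-walk of $G^e$, every traversal time exceeding $\tau^{e'}$ must use an unclosed temporal edge and can be replaced by $\tau^{e'}$, so either the walk already witnesses $\lambda^{e'}(w)$, or it departs after $\tau^{e'}$, uses only unclosed edges, and hence forces $w$ into the current component of $s$ -- which is precisely the one case where the algorithm bumps $LDT$ to $\tau$ at Line~\ref{lin:source} (and propagates that value to both halves of the split when $s$ lies in the affected component). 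You defer this to ``check that Line~\ref{lin:source} behaves correctly,'' but it is the heart of the ending-event case, and the same compression argument is also what justifies, in your starting-event case, that vertices in components touching neither $s$ nor $\{u,v\}$ keep their old value (a path witnessing a later departure could not traverse the new edge without thereafter remaining among unclosed edges, i.e.\ inside the merged component). Spelling this out would complete the proof.
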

    
    Note that applying this claim to the last event $l$ proves the correctness of Algorithm~\ref{alg:profile} as it ensures that $Prof^l$ is a representation of the $st$-profile in $G=G^l$. 
    The complexity of the algorithm directly follows from the  time complexities of the dynamic connected component procedures. 
    Proposition~\ref{prop:profile} will thus follow from the proof of this claim.
    
    \smallskip
    We first state the following about profile inclusion.

   \begin{claim}\label{clm:profile}
        For any two events $e$ and $f$ such that $e$ precedes $f$ in $E$, and any time $\lambda\le last(P_{st}^e)$, we have $P_{st}^f(\lambda)= P_{st}^e(\lambda)$.
    \end{claim}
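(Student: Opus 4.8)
The plan is to prove the two inequalities $P_{st}^f(\lambda)\le P_{st}^e(\lambda)$ and $P_{st}^f(\lambda)\ge P_{st}^e(\lambda)$ separately for every $\lambda\le last(P_{st}^e)$, and then combine them into the desired equality. The first inequality is the monotone direction and I would obtain it directly from Claim~\ref{clm:inclusion}: applying that claim along the events separating $e$ from $f$ shows that the temporal edge set of $G^e$ is contained in that of $G^f$. Consequently every temporal $st$-path of $G^e$ is also a temporal $st$-path of $G^f$, so the minimum arrival time over paths departing no earlier than $\lambda$ cannot increase when passing from $G^e$ to $G^f$. This is the same reasoning that yields Claim~\ref{clm:ldt-geq}, now read on arrival times instead of last departure times.

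The second inequality is where the zero-delay hypothesis is essential, and the auxiliary fact I would establish first is that $P_{st}^e(\lambda)\le\tau^e$ whenever $\lambda\le last(P_{st}^e)$. To prove it, I would take a temporal $st$-path $Q$ in $G^e$ departing no earlier than $\lambda$ and achieving arrival time $P_{st}^e(\lambda)$ (such a path exists for $\lambda\le last(P_{st}^e)$), and suppose for contradiction that $arr(Q)>\tau^e$. Every temporal edge traversed by $Q$ strictly after time $\tau^e$ must be an \emph{unclosed} temporal edge of $G^e$, since any edge whose ending event precedes $e$ is no longer present beyond $\tau^e$. As all unclosed temporal edges are present at time $\tau^e$ and have delay zero, the suffix of $Q$ lying beyond $\tau^e$ can be replaced by traversing these edges at time $\tau^e$, producing a temporal $st$-path with the same departure time that arrives exactly at $\tau^e<arr(Q)$, contradicting minimality.

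Armed with $P_{st}^e(\lambda)\le\tau^e$, I would finish the lower bound by a case analysis on any temporal $st$-path $Q'$ of $G^f$ departing no earlier than $\lambda$ and achieving $arr(Q')=P_{st}^f(\lambda)$. Each temporal edge of $G^f$ that is not in $G^e$ has its starting event after $e$, hence starting time at least $\tau^e$; and because all delays are zero, every traversal time along $Q'$ is at most $arr(Q')$. If $Q'$ uses at least one such new edge, then $arr(Q')\ge\tau^e\ge P_{st}^e(\lambda)$. Otherwise $Q'$ uses only temporal edges of $G^e$, so it is a temporal $st$-path of $G^e$ departing no earlier than $\lambda$, and $arr(Q')\ge P_{st}^e(\lambda)$ by definition of the profile. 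In both cases $P_{st}^f(\lambda)\ge P_{st}^e(\lambda)$, which together with the first inequality gives the claimed equality.

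The hard part will be the bookkeeping around the event ordering and the support restriction to $(-\infty,\tau^e]$. I expect the delicate point to be using the tie-breaking convention, under which all starting events at a given time precede all ending events at that time, to guarantee simultaneously that the edges of $G^f$ absent from $G^e$ start at time $\ge\tau^e$ and that the unclosed edges of $G^e$ are precisely those present at $\tau^e$. The zero-delay assumption underlies both the shortcutting argument that bounds $P_{st}^e(\lambda)$ by $\tau^e$ and the fact that every traversal time along a walk is at most its arrival time, so neither step survives for positive delays.
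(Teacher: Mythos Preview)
Your proposal is correct and follows essentially the same route as the paper: first the monotone inequality $P_{st}^f(\lambda)\le P_{st}^e(\lambda)$ from the inclusion of edge sets (Claim~\ref{clm:inclusion}), then the auxiliary bound $P_{st}^e(\lambda)\le\tau^e$ obtained by shortcutting the tail of an optimal path through unclosed edges at time $\tau^e$, and finally the reverse inequality by observing that any edge of $G^f\setminus G^e$ starts at time $\ge\tau^e$ so a path of $G^f$ either stays in $G^e$ or arrives no earlier than $\tau^e\ge P_{st}^e(\lambda)$. The paper compresses these three steps into a single paragraph but the logic is identical; your version simply spells out the case analysis and the role of the zero-delay hypothesis more explicitly.
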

    
    Since any temporal path in $G^e$ is also a temporal path in $G^f$, we obtain $P_{st}^f(\lambda)\le P_{st}^e(\lambda)$ for any time $\lambda\le last(P_{st}^e)$. We also have $P_{st}^e(\lambda)\le \tau^e$ since a temporal path arriving after $\tau^e$ in $G^e$ uses only unclosed temporal edges after $\tau^e$, and since these edges begin at $\tau^e$ or before we can transform the temporal path into one arriving at $\tau^e$. This implies that we have equality of profiles for $\lambda\le last(P_{st}^e)$ since no temporal path in $G^f$ can arrive before $P_{st}^e(\lambda)\le \tau^e$ as the possible additional temporal edges of $G^f$ begin at $\tau^e$ or later.  

    \smallskip
    We now prove Claim~\ref{clm:algo-ind} by induction on the number of events processed.
    Initially, we consider an empty temporal graph $G^0=(V,\emptyset)$ (with no temporal edge). As expected, $Prof$ is empty reflecting that there exists no temporal walk from $s$ to $t$ in $G^0$. 
    We assume the induction hypothesis for event $e$, i.e. $Prof^e$ is a representation of $P_{st}^e$. 
    Let us consider the next event $f$ and distinguish two cases.
    
    \textit{Ending event case.}
    First suppose that $f$ is an ending event $(u,v,\tau^f,\text{end},\tau^f)$. 
    Note that we have $G^e= G^f$ by definition. 
    By Claim~\ref{clm:profile}, we have $P_{st}^f(\lambda)=P_{st}^e(\lambda)$ for $\lambda\le last(P_{st}^e)$ so that $Prof^e$ is a valid representation of $P_{st}^f(\lambda)$ for $\lambda\le\tau^e$. Moreover, if $\lambda^f(t) \le \tau^e$,  $G^e=G^f$ implies $\lambda^f(t)= \lambda^e(t)\le last(P_{st}^e)$ and we have $P_{st}^f=P_{st}^e$. The only case when $P_{st}^f$ differs from $P_{st}^e$ is thus when $\lambda^f(t) > \tau^e$, that is when there exists a temporal $st$-path composed of unclosed temporal edges. In other words, this occurs when
    $CC^e(t)=CC^e(s)$ and $st\_connected$ is true. This implies that we then have $last(P_{st}^e)=\tau^e$ and $P_{st}^e(\tau^e)=\tau^e$. Moreover, $P_{st}^f$ is then the identity over $[\tau^e,\tau^f]$ as reflected by the addition of the triple $(\tau^f,\tau^f,1)$ at the end of $Prof$ in that case at Line~\ref{lin:connected}. $Prof^f$ is thus a valid representation of $P_{st}^f$.

    \textit{Beginning event case.}
    Now suppose that $f$ is a beginning event $(u,v,\tau^f,\text{begin},\tau_2)$ for the temporal edge $f'=(u,v,\tau^f,\tau_2)$ of $G$. Note that $G^f$ then differs from $G^e$ by having the two extra temporal edges $f'$ and its symmetrical temporal edge $f''=(v,u,\tau^f,\tau_2)$. 
    Consider the $st$-profile $P_{st}^f$ in $G^f$. It is the identity during interval $[\tau^e,\tau^f]$ when $CC^e(s)=CC^e(t)$ in correspondence with the addition of $(\tau^f,\tau^f,1)$ to $Prof$ at Line~\ref{lin:connected} in that case. Note that we then have $last(P_{st}^e)=\tau^e$ and $P_{st}^e(\tau^e)=\tau^e$.

    Now consider the case when $t$ is connected to $u$ or $v$, we have $\lambda_t=LDT^e[CC^e(t)]=\lambda^e(t)$ and $\lambda^f(t)=LDT^f[CC^f(u)]=\lambda$ by Claim~\ref{clm:algo-ind-bis}. If $\lambda>\lambda_t$, this is in correspondence with the addition of $(\lambda,\tau^f,0)$ at Line~\ref{lin:plateau} as a temporal $st$-path departing at time $\lambda$ must use $f'$ or $f''$ and cannot arrive before $\tau^f$. Note that we have $\lambda=\tau^f=\lambda_t$ when $CC^e(s)=CC^e(t)$ and the triple is not added in that case. Otherwise, $\lambda\le \lambda_t$ can only occur when $\lambda=\lambda_t=\lambda^f(t)=\lambda^e(t)$ since we have $\lambda^f(t)\ge\lambda^e(t)$ as any temporal path in $G^e$ is also a temporal path in $G^f$. We then have $\lambda^f(t)=last(P_{st}^e)$ implying $P_{st}^f=P_{st}^e$ which justifies to not modify $Prof$ in that case.

    Finally consider the case when $t$ is not in the connected component of neither $s$, nor $u$, nor $v$. We then have $P_{st}^f=P_{st}^e$ since a temporal $st$-path in $G^f$ cannot traverse $f'$ or $f''$ as they appear to late to reach $t$ after $\tau^f$. Any temporal $st$-path in $G^f$ is thus also a temporal $st$-path in $G^e$ with same departure time. This is indeed reflected by the fact that $Prof$ is not modified in that case, yielding to $Prof^f=Prof^e$. We finally obtain that $Prof^f$ is a valid representation of $P_{st}^f$ in all cases. This concludes our proof of Claim~\ref{clm:algo-ind} as well as the proofs of Proposition~\ref{prop:profile} and Theorem~\ref{thm:zero-delay}.
\end{proof}

\section{Conclusion}

We have presented non-trivial lower bounds showing a complexity gap between point temporal graphs and interval temporal graphs. They also show a complexity gap between the computation of a foremost temporal path and that of a fastest (resp. shortest) temporal path. As far as we know, these are the first results proving these gaps. Several questions arise from this work.

First, can we close the gap between our $\Omega(nM)$ lower bound for the computation of a fastest temporal $st$-path and the $\widetilde{\cal O}(mM)=\widetilde{\cal O}(n^2M)$ upper bound given by the best-known algorithm \cite{DehneOS2012}. Note that the gap can be large for dense underlying graphs with $m=\Theta(n^2)$. 

Combinatorial algorithms for computing a shortest path in a graph can usually be extended to compute one-to-all shortest paths. In the case of uniform zero delay, can we similarly hope to find an $\widetilde{\cal O}(M)$ algorithm for computing one-to-all fastest durations, i.e., given a source vertex $s$, an algorithm that computes the duration of a fastest temporal path from $s$ to each possible target vertex $t$?
Note that we cannot hope to turn Algorithm~\ref{alg:profile} into a one-to-all profile algorithm with the same complexity, since a profile can be of size $\Omega(M)$. But it can easily be turned into an $\widetilde{\cal O}(nM)$-time algorithm for one-to-all profiles. More generally, does this latter complexity also apply to interval temporal graphs with constant delays? Note that a positive answer would also apply to the computation of one-to-all fastest durations, and thus would probably also solve our first question.

Testing temporal connectivity might be related to the existence of a specific temporal spanner, that is a subset of temporal edges which are sufficient to preserve temporal connectivity. 
Note that our reduction for proving Theorem~\ref{thm:hardness-connectivity} produces a sparse temporal graph with $O(n)$ temporal edges, implying that the existence of a sparse spanner is not enough to enable temporal connectivity testing in subquadratic time.
We thus ask whether subquadratic time could be enabled by the existence of a specific well-structured temporal spanner. For example, a pivot vertex~\cite{CasteigtsPS2021}, which is a vertex that all vertices can reach by some time $\tau$ and which can reach all other vertices after $\tau$, enables 
such a temporal spanner. Thus, we ask whether such a pivot and an appropriate time $\tau$ can be found in subquadratic time when they exist.

Another natural extension of this work is to consider other problems beyond reachability and other complexity gaps such as P vs NP.

\bibliography{biblio}

\end{document}